\documentclass{article}
\pdfoutput=1
\usepackage{geometry}
\usepackage[T1]{fontenc}
\usepackage[utf8]{inputenc}

\usepackage{amsmath,amssymb}
\usepackage{amsthm}

\usepackage{axodraw2}
\usepackage{subcaption}
\usepackage{tikz}
\usetikzlibrary{positioning,patterns}

\usepackage[pdfauthor={Robert Beekveldt, Michael Borinsky, Franz Herzog},
            pdftitle={The Hopf algebra structure of the R*-operation},
            pdfsubject={The Hopf algebra structure of the R*-operation},
            pdfkeywords={Feynman integrals, renormalisation, Hopf algebra, infrared singularities, IR rearrangement, UV-IR duality}]{hyperref}

\newtheorem{theorem}{Theorem}
\newtheorem{corollary}{Corollary}
\newtheorem{lemma}{Lemma}
\newtheorem{assumption}{Assumption}

\def\beq{\begin{equation}}   
\def\eeq{\end{equation}}
\def\bea{\begin{eqnarray}}  
\def\eea{\end{eqnarray}} 
\def\nn{\nonumber}
\def\r{\right} 
\def\l{\left} 
\def\f21{{}_2F_{1}}
\def\eps{\epsilon}

\def\M{\mathcal{M}}
\def\H{\mathcal{H}}

\def\O{\mathcal{O}}
\def\A{\mathcal{A}}

\def\IR{\text{IR}}

\def\t{\widetilde}

\def\Q{\mathbb{Q}}
\DeclareMathOperator{\id}{id}
\def\one{\mathbb{I}}
\def\counit{\epsilon}
\def\unit{u}
\def\to{\rightarrow}
\def\UV{\mathrm{UV}}

\def\ZZ{Z}

\def\bubbleonenum#1#2{
\raisebox{-7pt}
{
\begin{axopicture}{(30,20)(-13,-10)}
\SetScale{1}\SetColor{black}%
\Line[width=0.25](-15,0)(-10,0)
\Line[width=0.25](10,0)(15,0) 
\Arc[width=0.25](0,0)(10,0,360)
\Vertex(-10,0){1.5}
\Vertex(10,0){1.5}
\SetScale{1}\SetColor{black}%
\Text(1.5,6.5){\tiny $#1$ \tiny}
\Text(1.5,-6.5){\tiny $#2$ \tiny}
\end{axopicture}
}
}

\def\bubbletwonum#1#2#3#4{
\raisebox{-7pt}
{
\begin{axopicture}{(30,20)(-13,-10)}
\SetScale{1}%
\Line[width=0.25](-15,0)(-10,0)
\Line[width=0.25](10,0)(15,0) 
\Arc[width=0.25](0,0)(10,0,360)
\Arc[width=0.25](10,10)(10,180,270)
\Vertex(-10,0){1.5}
\Vertex(10,0){1.5}
\Vertex(0,10){1.5}
\SetScale{1}%
\Text(-5,4){\tiny $#1$ \tiny}
\Text(2,1.5){\tiny $#2$ \tiny}
\Text(6,5.5){\tiny $#3$ \tiny}
\Text(0,-7){\tiny $#4$ \tiny}
\end{axopicture}
}
}

\def\bubbletwodotnumc#1#2#3#4{
\raisebox{-7pt}
{
\begin{axopicture}{(30,20)(-13,-10)}
\SetScale{1}\SetColor{black}%
\Line[width=0.25](-15,0)(-10,0)
\Line[width=0.25](10,0)(15,0) 
\Line[width=0.25](-10,0)(10,0) 
\Arc[width=0.25](0,0)(10,0,360)
\Vertex(0,0){1.5}
\Vertex(-10,0){1.5}
\Vertex(10,0){1.5}
\SetScale{1}\SetColor{black}%
\Text(1.5,13){\tiny $#1$ \tiny}
\Text(-3,3){\tiny $#2$ \tiny}
\Text(5.5,3){\tiny $#3$ \tiny}
\Text(1.5,-6.5){\tiny $#4$ \tiny}
\end{axopicture}
}
}

\def\tadpolenumrem#1#2{
\raisebox{-10pt}
{
\begin{axopicture}{(15,20)(-13,-10)}
\SetScale{1}\SetColor{black}%
\Line[width=0.25](-10,0)(0,0)
\Arc[width=0.25](-5,5)(5,0,360)
\Vertex(-5,0){1.5}
\Vertex(-5,10){1.5}
\SetScale{1}\SetColor{black}%
\Text(-6,4.5){\tiny $#1$ \tiny}
\Text(-1.5,4.5){\tiny $#2$ \tiny}
\end{axopicture}
}
}

\def\vaconenum#1#2{
\raisebox{-7pt}
{
\begin{axopicture}{(25,20)(-11,-10)}
\SetScale{1}\SetColor{black}%
\Arc[width=0.25](0,0)(10,0,360)
\Vertex(-10,0){1.5}
\Vertex(10,0){1.5}
\SetScale{1}\SetColor{black}%
\Text(1.5,6.5){\tiny $#1$ \tiny}
\Text(1.5,-6.5){\tiny $#2$ \tiny}
\end{axopicture}
}
}

\def\contractgraph1{
  \begin{axopicture}(530,60) (0,100)
    \SetWidth{1.0}
    \SetColor{black}
    \Arc[clock](288,62)(200,163.74,16.26)
    \Arc[width=0.25](288,174)(200,-163.74,-16.26)
    \Line[width=0.25](16,230)(96,118)
    \Line[width=0.25](480,118)(560,6)
    \Line[width=0.25](560,230)(480,118)
    \Line[width=0.25](96,118)(16,6)
    \Vertex(96,118){12}
    \Vertex(480,118){12}
    \Line[width=0.25](96,118)(480,118)
    \Vertex(288,260){12}
    \Vertex(288,-24){12}
  \end{axopicture}
}

\def\graphcut2{
  \begin{axopicture}(170,50) (270,70)
    \SetWidth{1.0}
    \SetColor{black}

	\ECirc(360,100){40}
	\Vertex(360,60){10}
	\ECirc(360,180){40}
	\Vertex(360,140){10}
	\ECirc(360,20){40}

\end{axopicture}
}

\def\contract77{
  \begin{axopicture}(150,50) (270,50)
    \SetWidth{1.0}
    \SetColor{black}

	\ECirc(360,100){40}
	\Vertex(360,60){10}
	\Vertex(360,140){10}
	\ECirc(360,20){40}

\end{axopicture}
}

\def\graphR{
\begin{axopicture}(150,60) (50,0)
    \SetWidth{1.0}
    \SetColor{black}
    \Arc[width=0.25](128,12)(45,135,495)
    \Line[width=0.25](80,12)(176,12)
    \Vertex(83,12){6}
    \Vertex(173,12){6}
    \Vertex(128,57){6}

  \end{axopicture}
}

\def\fullext{
\begin{axopicture}(194,92) (31,0)
    \SetWidth{1.0}
    \SetColor{black}
    \Arc[width=0.25](128,12)(45,135,495)
    \Line(32,12)(80,12)
    \Line(224,12)(176,12)
    \Line[width=0.25](80,12)(176,12)
    \Vertex(83,12){6}
    \Vertex(173,12){6}
    \Vertex(128,57){6}

     \fontsize{6}{100}\selectfont
    \Text(118,17)[lb]{2}
    \Text(118,-27)[lb]{3}
    \Text(80,50)[lb]{1}
    \Text(164,50)[lb]{4}
  \end{axopicture}
}

\def\subgraph23{
\begin{axopicture}(140,110)(55,0)
    \SetWidth{1.0}
    \SetColor{black}
    \Arc[width=0.25](128,12)(45,135,495)
    \Vertex(83,12){6}
    \Vertex(173,12){6}    

     \fontsize{6}{100}\selectfont
    \Text(123,-57)[lb]{2}
    \Text(123,65)[lb]{3}
\end{axopicture}
}

\def\extlegs23{
\begin{axopicture}(200,110)(31,0)
    \SetWidth{1.0}
    \SetColor{black}
    \Arc[width=0.25](128,12)(45,135,495)
    \Line(32,12)(80,12)
    \Line(224,12)(176,12)
    \Vertex(83,12){6}
    \Vertex(173,12){6}    

     \fontsize{6}{100}\selectfont
    \Text(123,-27)[lb]{2}
    \Text(123,65)[lb]{3}
\end{axopicture}
}
\def\externallegs14{
\begin{axopicture}(90,92) (0,40)
    \SetWidth{1.0}
    \SetColor{black}
	\Bezier(40,100)(10,55)(10,55)(40,10)
	\Bezier(40,100)(70,55)(70,55)(40,10)
	\Vertex(40,100){6}
	\Vertex(40,10){6}
	\Line(5,10)(40,10)
	\Line(80,10)(40,10)
	 \fontsize{6}{100}\selectfont
	\Text(-4,55)[lb]{1}
    \Text(71,55)[lb]{4}
\end{axopicture}
}

\def\subgraphR14irdiv{
\begin{axopicture}(90,120) (0,48)
    \SetWidth{1.0}
    \SetColor{black}
	\Bezier(40,100)(10,55)(10,55)(40,10)
	\Bezier(40,100)(70,55)(70,55)(40,10)
	\Vertex(40,100){6}
	\Vertex(40,10){6}

	 \fontsize{6}{100}\selectfont
	\Text(-4,55)[lb]{1}
    \Text(71,55)[lb]{4}
\end{axopicture}
}

\def\diagram2loopA{
  	\begin{axopicture}(80,45)(0,-5)
	\SetWidth{1.0}
    \SetColor{black}    
    \Line[width=0.25](0,0)(20,0)
    \Arc[width=0.25](40,0)(20,0,360)
    \Line[width=0.25](80,0)(60,0)
    \Vertex(20,0){3} %
    \Vertex(60,0){3} %
    \Vertex(40,20){3} %
    \Bezier(60,0)(45,6)(45,6)(40,20)
   \end{axopicture}
}

\def\adiagram1LoopA{
\begin{axopicture}(115,60)(0,0)
	\SetScale{0.35}
    \SetWidth{1.0}
    \SetColor{black}
    \Arc[width=0.25](128,12)(45,135,495)
    \Line(32,12)(80,12)
    \Line(224,12)(176,12)
    \Vertex(83,12){6}
    \Vertex(173,12){6}    

     \fontsize{6}{100}\selectfont
\end{axopicture}
}

\def\cograph1{
\begin{axopicture}(200,110)(31,0)
    \SetWidth{1.0}
    \SetColor{black}
    \Arc[width=0.25](128,12)(45,135,495)
    \Line(32,12)(80,12)
    \Line(224,12)(176,12)
    \Vertex(83,12){6}
    \Vertex(173,12){6}    
\end{axopicture}
}

\def\trianglefull{
\begin {axopicture}(46,30)(-15,-14)
    \SetColor{black}  
    \SetScale{0.55}      
    \Line[width=0.25](-20,10)(0,0)
    \Line[width=0.25](-20,-50)(0,-40)
    \Line[width=0.25](0,0)(20,-10)%
    \Line[width=0.25](20,-10)(40,-20)%
    \Line[width=0.25](0,-40)(20,-30)%
    \Line[width=0.25](20,-30)(40,-20)%
    \Line[width=0.25](20,-10)(20,-30)%
    \DoubleLine(0,-40)(0,0){3}%
    \Vertex(0,0){3}
    \Vertex(0,-40){3}
    \Vertex(20,-10){3}
    \Vertex(20,-30){3}
    \Vertex(40,-20){3}
\end{axopicture}
}

\def\1subgraph1{
\begin {axopicture}(46,30)(-15,-14)
    \SetColor{black}  
	\SetScale{0.55}  

    \Line[width=0.25](-20,10)(0,0)
    \Line[width=0.25](-20,-50)(0,-40)
    \Line[width=0.25](0,0)(20,-10)%
    \Line[width=0.25](20,-10)(40,-20)%

    \Line[width=0.25](0,-40)(20,-30)%
    \Line[width=0.25](20,-30)(40,-20)%

    \DoubleLine(0,-40)(0,0){3}%
    \Vertex(0,0){3}
    \Vertex(0,-40){3}
    \Vertex(20,-10){3}
    \Vertex(20,-30){3}
    \Vertex(40,-20){3}
   \end{axopicture}
}

\def\2subgraph2{
\begin {axopicture}(26,30)(-15,-14)
    \SetScale{0.55}  
	\Line[width=0.25](-20,10)(0,0)
    \Line[width=0.25](-20,-50)(0,-40)
    \DoubleLine(0,-40)(0,0){3}
    \Vertex(0,0){3}
    \Vertex(0,-40){3}
   \end{axopicture}
}

\def\3subgraph3{
\begin {axopicture}(36,30)(-15,-14)
	\SetScale{0.55}  	
	\Line[width=0.25](-20,10)(0,0)
    \Line[width=0.25](-20,-50)(0,-40)
    \Line[width=0.25](0,0)(20,-10)%

    \Line[width=0.25](0,-40)(20,-30)%

    \Line[width=0.25](20,-10)(20,-30)%
    \DoubleLine(0,-40)(0,0){3}%
    \Vertex(0,0){3}
    \Vertex(0,-40){3}
    \Vertex(20,-10){3}
    \Vertex(20,-30){3}
   \end{axopicture}
}

\def\TTsubgraph2{
   \begin{axopicture}(30,30)(-15,-3)
    \SetWidth{1.0}
    \SetColor{black}
	\SetScale{0.2}
	\Arc[width=0.25](0,0)(40,90,270)
	\Vertex(0,-40){10}
	\Vertex(0,40){10}
	\Vertex(40,0){10}
	\Vertex(-40,0){10}
	\Line[width=0.25](0,-40)(0,40)
	\Line[width=0.25](40,0)(0,40)
	\Line[width=0.25](40,0)(0,-40)
\end{axopicture}
}

\def\TAsubgraph3{
   \begin{axopicture}(30,30)(-15,-3)
    \SetWidth{1.0}
    \SetColor{black}
	\SetScale{0.2}
	\ECirc(0,0){40}
	\Vertex(40,0){10}
	\Vertex(-40,0){10}
\end{axopicture}
}

 \def\Tsubgraph1{
   \begin{axopicture}(30,30)(-15,-3)
     \SetWidth{1.0}
     \SetColor{black}
 	\SetScale{0.2}
 	\ECirc(0,0){40}
 	\Vertex(0,-40){10}
 \end{axopicture}
 }

\def\TSsubgraph4{
   \begin{axopicture}(30,30)(-15,-3)
    \SetWidth{1.0}
    \SetColor{black}
	\SetScale{0.2}
	\Arc[width=0.25](0,0)(40,90,270)
	\Vertex(0,-40){10}
	\Vertex(0,40){10}
	\Vertex(40,0){10}
	\Vertex(-40,0){10}
	\Line[width=0.25](40,0)(0,40)
	\Line[width=0.25](40,0)(0,-40)
\end{axopicture}
}

\def\TOsubgraph4{
   \begin{axopicture}(30,30)(-15,-3)      
    \SetWidth{1.0}
    \SetColor{black}
	\SetScale{0.2}
	\Arc[width=0.25](0,0)(40,90,270)
	\Vertex(0,-40){10}
	\Vertex(0,40){10}
	\Vertex(-40,0){10}
	\Line[width=0.25](0,40)(0,-40)
\end{axopicture}
}

\def\Ttriangle1{
   \begin{axopicture}(30,30)(-6,-18)
	\SetScale{0.8}
    \Line[width=0.25](0,-10)(20,-20)%

    \Line[width=0.25](0,-30)(20,-20)%

    \Line[width=0.25](0,-10)(0,-30)%
    \Vertex(0,-10){2}
    \Vertex(0,-30){2}
    \Vertex(20,-20){2}

	\end{axopicture}
}

\def\Tttriangle2{
   \begin{axopicture}(40,25)(-15,-13)
	\SetScale{0.5}
	\Line[width=0.25](0,0)(20,-10)%
    \Line[width=0.25](20,-10)(40,-20)%

    \Line[width=0.25](0,-40)(20,-30)%
    \Line[width=0.25](20,-30)(40,-20)%

    \Line[width=0.25](-20,10)(0,0)
    \Line[width=0.25](-20,-50)(0,-40)
    \DoubleLine(0,-40)(0,0){3}%
    \Vertex(0,0){3}
    \Vertex(0,-40){3}
    \Vertex(40,-20){3}
    \end{axopicture}
}

\def\michitriangle{
\begin{axopicture}(150,60) (50,0)
    \SetWidth{1.0}
    \SetColor{black}
    \Arc[width=0.25](128,12)(45,0,180)
    \Line[width=0.25](80,12)(176,12)
    \Vertex(83,12){6}
    \Vertex(173,12){6}
    \Vertex(128,57){6}

  \end{axopicture}
}
\def\michitadpole{
\begin{axopicture}(150,60) (50,0)
    \SetWidth{1.0}
    \SetColor{black}
    \Arc[width=0.25](128,12)(45,0,360)
    \Vertex(128,57){6}

  \end{axopicture}
}

\def\diagramtwoloopB{
  	\begin{axopicture}(80,45)(-10,-5)
	\SetWidth{1.0}
    \SetColor{black}    
    \Line[width=0.25](0,0)(20,0)
    \Arc[width=0.25](40,0)(20,0,360)
    \Vertex(20,0){3} %
    \Vertex(60,0){3} %
    \Vertex(40,-20){3} %
    \Bezier(20,0)(32,-5)(32,-5)(40,-19)
    \Line[width=0.25](40,-35)(40,-19)
   \end{axopicture}
}

\def\diagramtwoloopBt{
  	\begin{axopicture}(80,45)(-10,-20)
	\SetWidth{1.0}
    \SetColor{black}    
    \Line[width=0.25](0,0)(20,0)
    \Arc[width=0.25](40,0)(20,0,360)
    \Vertex(20,0){3} %
    \Vertex(60,0){3} %
    \Vertex(40,-20){3} %
    \Bezier(20,0)(32,-5)(32,-5)(40,-19)
    \Line[width=0.25](40,-35)(40,-19)
   \end{axopicture}
}

\def\diagramthreeloopA{
  	\begin{axopicture}(80,40)(0,-5)
	\SetWidth{1.0}
    \SetColor{black}    
    \Line[width=0.25](0,0)(20,0)
    \Arc[width=0.25](40,0)(20,0,360)
    \Line[width=0.25](80,0)(60,0)
    \Vertex(20,0){3} %
    \Vertex(60,0){3} %
    \Vertex(40,20){3} %
    \Bezier(60,0)(45,6)(45,6)(40,20)
    \Vertex(40,-20){3} %
    \Bezier(20,0)(32,-5)(32,-5)(40,-19)
   \end{axopicture}
}
\def\diagramthreeloopB{
  	\begin{axopicture}(80,40)(-5,-10)
	\SetWidth{1.0}
    \SetColor{black}    
    \Line[width=0.25](0,0)(20,0)
    \Arc[width=0.25](40,0)(20,0,360)
    \Vertex(20,0){3} %
    \Vertex(60,0){3} %
    \Vertex(40,20){3} %
    \Bezier(60,0)(45,6)(45,6)(40,20)
    \Vertex(40,-20){3} %
    \Bezier(20,0)(32,-5)(32,-5)(40,-19)
    \Line[width=0.25](40,-35)(40,-19)
   \end{axopicture}
}

\def\twoloopunderA{
	\begin{axopicture}(100,40)(-10,-5)
	\Line[width=0.25](0,0)(20,0)
    \Arc[width=0.25](40,0)(20,0,360)
    \Line[width=0.25](80,0)(60,0)
    \Vertex(20,0){3} %
    \Vertex(60,0){3} %
    \Vertex(40,-20){3} %
    \Bezier(20,0)(32,-5)(32,-5)(40,-19)
   \end{axopicture}
}    

\def\vacuumoneloop{
	\begin{axopicture}(60,40)(-10,-6)
    \Arc[width=0.25](20,0)(20,0,360)
    \Vertex(0,0){3} %
    \Vertex(40,0){3} %
    \end{axopicture}
}

\def\bubble{
	\begin{axopicture}(90,40)(-5,-5)
	\Line[width=0.25](0,0)(20,0)
    \Arc[width=0.25](40,0)(20,0,360)
    \Line[width=0.25](80,0)(60,0)
    \Vertex(20,0){3} %
    \Vertex(60,0){3} %
    \end{axopicture}
}

\def\bubbleOneTop{
	\begin{axopicture}(50,40)(-5,-5)
    \Arc[width=0.25](20,0)(20,0,360)
    \Vertex(20,20){3} %
    \end{axopicture}
}

\def\bubbleHalfTop{
	\begin{axopicture}(90,40)(-5,-5)
	\Line[width=0.25](0,0)(20,0)
    \Arc[width=0.25](40,0)(20,0,180)
    \Line[width=0.25](80,0)(60,0)
    \Vertex(20,0){3} %
    \Vertex(60,0){3} %
    \end{axopicture}
}

\def\bubbleOneBottom{
	\begin{axopicture}(50,40)(-5,-5)
    \Arc[width=0.25](20,0)(20,0,360)
    \Vertex(20,-20){3} %
    \end{axopicture}
}

\def\bubbleHalfBottom{
	\begin{axopicture}(90,40)(-5,-5)
	\Line[width=0.25](0,0)(20,0)
    \Arc[width=0.25](40,0)(20,180,360)
    \Line[width=0.25](80,0)(60,0)
    \Vertex(20,0){3} %
    \Vertex(60,0){3} %
    \end{axopicture}
}

\def\franzone{
	\begin{axopicture}(90,40)(-5,-5)
	\Line[width=0.25](0,0)(20,0)
    \Arc[width=0.25](40,0)(20,180,360)
    \Line[width=0.25](80,0)(60,0)
    \Vertex(20,0){3} %
    \Vertex(60,0){3} %

    \Line[width=0.25](20,0)(60,0)

	\Line[width=0.25](20,0)(20,8)
	\Line[width=0.25](60,0)(60,8)

    \fontsize{6}{100}\selectfont
    \Text(37,3)[lb]{2}
    \Text(37,-28)[lb]{3}
    \end{axopicture}
}

\def\franzonevertex{
	\begin{axopicture}(80,40)(-5,-5)

    \Arc[width=0.25](40,0)(20,0,180)

    \Vertex(40,20){3}

    \Arc[width=0.25](20,-3)(3,0,360)
    \Arc[width=0.25](60,-3)(3,0,360)
    \fontsize{6}{100}\selectfont
    \Text(16,12)[lb]{1}
    \Text(57,12)[lb]{4}
    \end{axopicture}
}

\def\diagramtwovacuum{
  	\begin{axopicture}(62,40)(7,-5)
	\SetWidth{1.0}
    \SetColor{black}    
    \Arc[width=0.25](40,0)(20,0,360)
    \Vertex(20,0){3} %
    \Vertex(60,0){3} %
    \Vertex(40,20){3} %
    \Bezier(60,0)(45,6)(45,6)(40,20)
   \end{axopicture}
}

\def\examplelegs23{
\begin{axopicture}(200,60)(31,0)
    \SetWidth{1.0}
    \SetColor{black}
    \Arc[width=0.25](128,12)(45,135,495)
    \Line(32,12)(80,12)
    \Line(224,12)(176,12)
    \Vertex(83,12){6}
    \Vertex(173,12){6}    

     \fontsize{6}{100}\selectfont
    \Text(123,-57)[lb]{2}
    \Text(123,65)[lb]{3}
\end{axopicture}
}

\def\moticexample{
\begin{axopicture}(80,20)(0,-5)
    \SetWidth{1.0}
    \SetColor{black}    
    \Line(0,0)(20,0)
    \Arc[double](40,0)(20,0,180)
    \Arc(40,0)(20,180,360)
    \Line(80,0)(60,0)
    \Vertex(20,0){3} %
    \Vertex(60,0){3} %
    \Line(20,0)(60,0)	
   \Text(38,-18)[lb]{\tiny 3\tiny }
   \Text(38,2)[lb]{\tiny  2\tiny }
   \Text(38,23)[lb]{\tiny 1\tiny }
\end{axopicture}
}

\def\moticexamplea{
\begin{axopicture}(80,30)(0,-5)
    \SetWidth{1.0}
    \SetColor{black}    
    \Line(0,0)(20,0)
    \Arc[double](40,0)(20,0,180)
    \Line(80,0)(60,0)
    \Vertex(20,0){3} %
    \Vertex(60,0){3} %
   \Text(38,23)[lb]{\tiny 1 \tiny }
\end{axopicture}
}

\def\moticexampleab{
\begin{axopicture}(80,30)(0,-5)
    \SetWidth{1.0}
    \SetColor{black}    
    \Line(0,0)(20,0)
    \Arc[double](40,0)(20,0,180)
    \Line(80,0)(60,0)
    \Vertex(20,0){3} %
    \Vertex(60,0){3} %
    \Line(20,0)(60,0)	
   \Text(38,23)[lb]{\tiny 1 \tiny }
   \Text(38,2)[lb]{\tiny 2 \tiny }
\end{axopicture}
}

\def\moticexampleac{
\begin{axopicture}(80,30)(0,-5)
    \SetWidth{1.0}
    \SetColor{black}    
    \Line(0,0)(20,0)
    \Arc[double](40,0)(20,0,180)
    \Arc(40,0)(20,180,360)
    \Line(80,0)(60,0)
    \Vertex(20,0){3} %
    \Vertex(60,0){3} %
   \Text(38,23)[lb]{\tiny 1 \tiny }
   \Text(38,-18)[lb]{\tiny 3 \tiny }
\end{axopicture}
}

\def\moticexamplebc{
\begin{axopicture}(80,30)(0,-5)
    \SetWidth{1.0}
    \SetColor{black}    
    \Arc(40,0)(20,180,360)
    \Vertex(20,0){3} %
    \Vertex(60,0){3} %
    \Line(20,0)(60,0)	
   \Text(38,-18)[lb]{\tiny 3\tiny }
   \Text(38,2)[lb]{\tiny 2 \tiny }
\end{axopicture}
}

\title{The Hopf algebra structure of the $R^*$-operation}

\author{Robert Beekveldt$^{1}$, Michael Borinsky$^{1}$, Franz Herzog$^{1,2}$ \\ 
}
\date{%
    $^1$\small{\textsc{Nikhef Theory Group, Science Park 105, 1098 XG Amsterdam, The Netherlands}}\\%
    $^2$\small{\textsc{Higgs Centre for Theoretical Physics, School of Physics and Astronomy ---\\ --- The University of Edinburgh, Edinburgh EH9 3FD, Scotland, UK}}%
}

\begin{document}

\vspace*{-2\baselineskip}%
\hspace*{\fill} \mbox{\footnotesize{\textsc{Nikhef 2020-005}}}

{\let\newpage\relax\maketitle}

\begin{abstract}
We give a Hopf-algebraic formulation of the $R^*$-operation, which is a canonical way to render UV and IR divergent Euclidean Feynman diagrams finite. Our analysis uncovers a close connection to Brown's Hopf algebra of motic graphs. Using this connection we are able to provide a verbose proof of the long observed `commutativity' of UV and IR subtractions. We also give a new duality between UV  and IR counterterms, which, entirely algebraic in nature, is formulated as an inverse relation on the group of characters of the Hopf algebra of log-divergent scaleless Feynman graphs. Many explicit examples of calculations with applications to infrared rearrangement are given.
\end{abstract}

\section{Introduction}

Feynman integrals are the fundamental building blocks of quantum field theory multi-loop scattering amplitudes which describe the interactions of fundamental particles measured for instance at particle colliders such as the Large Hadron Collider. As such their mathematical properties are directly connected to the physical properties of the scattering amplitudes and are important to understand both from a conceptual as well as from a practical point of view. By better understanding these properties one can for instance improve the methodology for performing much-needed multi-loop precision calculations. In recent decades it has been discovered that Hopf algebras, which operate on the vector space of the Feynman graphs, play a central role in this connection.

The first such example is the Connes-Kreimer Hopf algebra of Feynman diagrams \cite{Kreimer:1997dp,Connes:1999yr}. It relates physically to the \textit{ultraviolet} (UV) divergences and their removal via the procedure of renormalisation. This Hopf algebra governs in particular the combinatorics of the forest formula, a rigorous, but also practical, formulation of renormalisation developed by Bogoliubov, Parasiuk, Hepp and Zimmermann (BPHZ) already more than 50 years ago \cite{Bogoliubov:1957gp,Hepp:1966eg,Zimmermann:1969jj}. The Hopf-algebraic formulation has allowed to organise the algebraically somewhat ad-hoc action of the forest formula through the natural Hopf-algebraic operations: the coproduct and the antipode. Thereby many proofs and properties related to renormalisation could be drastically simplified. 

Another example of such a Hopf algebra, due to Goncharov, is related to generalised polylogarithms and multiple zeta values \cite{Goncharov:2001iea,Brown:2011ik,Duhr:2012fh}, which govern the analytic properties and functional relations of a class of special functions to which Feynman integrals often evaluate. 

Hopf algebras also play an important role while analysing the surprisingly rich number theoretical properties of Feynman integrals \cite{Bloch:2005bh,Brown:2009ta} with their underlying \emph{motives}, a generalised notion of cohomology. For instance, these deep insights into the structure of Feynman integrals were used to prove that multiple zeta values and multiple polylogarithms are insufficient to describe Feynman amplitudes in general, even in the massless context as was initially conjectured by Kontsevich \cite{belkale2003,brown2012}. Another fruit of this program was the development and implementation of an extremely general analytic integration algorithm for Feynman integrals \cite{Brown:2008um,Panzer:2014caa}. Most recently, this program culminated in the formulation of the conjecture that Feynman amplitudes fulfill a general \emph{coaction principle} and are heavily constrained by a number theoretical symmetry encoded in the so called \emph{cosmic Galois group} \cite{Brown:2015fyf,Schnetz:2016fhy}. Such a \emph{coaction}, which is a generalised cousin of a \emph{coproduct}, will reduce to a coaction which is related to the Goncharov coproduct when restricted to motivic multiple zeta valued amplitudes \cite{brown2012mixed,Brown:2015fyf}. More recently, it has even been conjectured that such a coaction can be realised combinatorially at the Feynman graph level \cite{Abreu:2017enx,Abreu:2017mtm,Abreu:2019eyg}. 

Here, we will make heavy use of the Hopf algebra of \emph{motic graphs}, which was first formulated by Brown \cite{Brown:2015fyf}, and repurpose it to deal explicitly with the long-distance or \textit{infrared} (IR) divergences  which can occur in Euclidean Feynman diagrams. This Hopf-algebraic construction can be seen as a generalisation of the  Connes-Kreimer Hopf algebra of UV renormalisation.  A generalisation of the BPHZ forest formula to Euclidean IR divergences, known as the $R^*$-operation, has already been given about 35 years ago by Chetyrkin, Tkachov and Smirnov \cite{Chetyrkin:1982nn,Chetyrkin:1984xa}; see also \cite{Caswell:1981ek,Smirnov:1986me,Phi4,Larin:2002sc,Chetyrkin:2017ppe}. This generalisation is not only of conceptual interest, but also relevant for practical calculations.

\paragraph{Why study Euclidean IR divergences?}While IR divergences do not usually occur in physically relevant Euclidean correlators, they can appear after an infrared rearrangement (a rearrangement of the external momenta) of the graph has been performed. The IR rearrangement procedure is useful as it allows one to simplify the calculation of renormalisation counterterms \cite{Vladimirov:1979zm}. In this context the $R^*$-operation comes in handy, as it can be used to subtract the unphysical IR divergences, which were introduced by an IR rearrangement. In fact, the local counterterm of an $L$-loop $n$-point Feynman diagram can, with the aid of the $R^*$-operation, be extracted from a set of $2$-point Feynman diagrams each with at most $(L-1)$ loops. The $R^*$-operation has been used extensively in calculations of anomalous dimensions in scalar $\phi^4$ theory up to six loops \cite{Chetyrkin:1981jq,Kleinert:1991rg,Kompaniets:2016hct,Kompaniets:2017yct,Batkovich:2016jus}, which were only exceeded by a seven loop result obtained with position space techniques \cite{Schnetz:2016fhy}, and in QCD up to five loops \cite{Herzog:2017ohr,Herzog:2018kwj,Herzog:2017dtz}. In particular the latter developments define the current state of the art in QCD and were possible in part through an extension of the local $R^*$-operation to generic Feynman diagrams with tensor structures attached to them \cite{Herzog:2017bjx}. Another so-called global formulation of the $R^*$-operation, which derives IR counterterms at the level of Green's function by making use of the heavy-mass expansion, has been developed and applied in many important multi-loop calculations; see e.g. \cite{Baikov:2012er,Baikov:2014qja,Baikov:2005rw,Baikov:2016tgj,Chetyrkin:2017bjc}. 

Since the $R^*$-operation leads in essence to a generalised forest formula, one can expect that there should exist a Hopf-algebraic formulation just like in the UV case. To provide such a formulation constitutes the main aim of this paper. A big step in this direction has in fact already been taken by Brown \cite{Brown:2015fyf}, who showed that all singularities of non-exceptional Euclidean Feynman graphs (this includes UV and IR divergences in Feynman parametric space as characterised originally by Speer \cite{Speer:1975dc}) can be described by a generalisation of the Connes-Kreimer Hopf algebra. The basic building blocks of this generalised Hopf algebra are the so-called \textit{motic} graphs. While Brown's construction is certainly related to the $R^*$-operation, it differs from it in two essential ways: it is based on the Feynman parametric representation and the divergences are isolated through blow-ups of certain linear subspaces in the projectified Feynman parametric space. Brown's goal with this setup was to define a \textit{motivic Feynman integral}. The $R^*$-operation is instead formulated in momentum space and relies on dimensional regularisation to isolate divergences. A striking difference between the two formulations is that in the $R^*$-operation a clear distinction is made between IR and UV divergences, which are treated with individual counterterm operations. In Brown's formulation these two operations are instead merged into one, which despite of its beautiful simplicity does obscure their physical origin. 

\paragraph{Outline of this paper}
The main goal of this work is to rewrite the $R^*$-operation in terms of the operations of the Hopf algebra of motic subgraphs. We start with a review of the various topics involved: the local $R^*$-operation, motic subgraphs and their associated Hopf algebra in Section~\ref{section:background}. In Section~\ref{sec:R*Hopf}, we begin with a proof for a close relationship between IR-subgraphs, as they were defined in the original works \cite{Chetyrkin:1984xa,Smirnov:1986me}, and a subclass of motic subgraphs. With this identification we are then able to give a complete formulation of the $R^*$-operation in the language of the Hopf-algebra of motic subgraphs. In particular, we introduce two coactions which describe the combinatorics of the UV and IR divergences respectively. We then reformulate the IR and UV counterterm operations using these coactions. Detailed examples of the formalism are given in Section~\ref{sec:IRrearrangement} where the counterterm calculation of a Feynman diagram via IR rearrangement is performed.
In Section~\ref{sec:antipode} we uncover a duality between the UV and IR counterterm maps that becomes apparent in the Hopf algebra framework. This very simple duality can serve as a definition of the IR counterterm operation. We conclude with a discussion of future uses and extensions of the proposed formalism in Section~\ref{sec:conclusion}.

\section{Background}
\label{section:background}

Our discussion will be focused on Euclidean Feynman integrals whose external momenta $p_{i}$ with $i\in\{1,..,n\}$ are \textit{non-exceptional}. That means any non-trivial subset of the vectors $p_i$ sums to a non-null vector:
\beq
\sum_{i\in S} p_i\neq 0, \text{ for any } S\subsetneq \{1,..,n\}.
\eeq
All divergences of such Feynman integrals can be handled by the $R^*$-operation. Before turning to the $R^*$-operation we will first review the two separate operations on which it is based. These are the $R$-operation, which describes the renormalisation of UV divergences at the level of individual Feynman diagrams and its infrared counterpart the $\t{R}$-operation. 

\subsection{\texorpdfstring{The $R$-operation and divergent UV-subgraphs}{The R-operation and divergent UV-subgraphs}}

\label{setup:R}

The classic $R$-operation was originally formulated by Bogoliubov, Parasiuk \cite{bogoliubov1957multiplication} and later corrected by Hepp \cite{hepp1966proof}. A generalised formulation by Zimmermann followed \cite{Zimmermann:1968mu,Zimmermann:1969jj}. We highly recommend Collin's book on renormalisation for an in depth account on the $R$-operation and renormalisation in general \cite{collins1985renormalization}. Here, we shall follow the notation of \cite{Herzog:2017bjx}. In this reference as well as in earlier works \cite{Caswell:1981ek,Phi4,Chetyrkin:1982nn,Chetyrkin:1984xa} the symbol $\Delta$ has been used to denote local counterterm operations, this notation unfortunately clashes with the standard notation used in the Hopf algebra literature, where $\Delta$ is used for the coproduct. To avoid confusion we shall instead use the symbol $Z$ to describe the local counterterm operations and reserve $\Delta$ for the Hopf algebra's coproduct or associated coactions. To physicists $Z$ is indeed familiar since it is commonly used to describe the renormalisation constants appearing in quantum field theory Lagrangians, which are directly related to the local counterterm. 

A second point which deserves clarification concerns the notion of a Feynman graph $\Gamma$ versus its associated Feynman integral, which is obtained after applying the Feynman rules to $\Gamma$. In the physics literature these two notations are often used interchangeably, and it is then left to the context to determine whether $\Gamma$ refers to the graph or its integral. For example,
\begin{align*} \bubbleonenum{}{} = \int \frac{d^Dk}{\pi^{D/2}} \frac{1}{k^2(k+p)^2} \end{align*}
with $p$ the external momentum and $D$ the dimension of spacetime. 

While we shall use this physicist's convention in our review in this section, we will be more strict on this point in the Hopf-algebraic formulation presented in Section~\ref{sec:R*Hopf}. There, we will use the notation $\Gamma$ exclusively for the respective Feynman diagram and the verbose notation $\phi(\Gamma)$ for the associated Feynman integral, where $\phi$ denotes the Feynman rules interpreted as an explicit map from combinatorial diagrams to integrals.

Another important property of a Feynman diagram is its superficial degree of divergence $\omega(\Gamma)$:
\begin{align*} \omega(\Gamma)= D L_\Gamma + \sum_{v \in V_\Gamma} \omega(v)-\sum_{e \in E_\Gamma} \omega(e), \end{align*}
where $V_\Gamma$ is the set of vertices of $\Gamma$, $E_\Gamma$ is its set of edges and $L_\Gamma$ is its number of loops. For vertices and edges, $\omega(v)$ and $\omega(e)$ stand respectively for the mass-dimension of their Feynman rule. For concreteness, we will assume the usual scalar case, $\omega(v) = 0$ and $\omega(e)=2$ with $D=4$ in all given examples. The statements remain true for arbitrary edge and vertex weights. 
As the name suggests the superficial degree of divergence characterises the asymptotic UV, or large loop-momentum, behaviour of the Feynman integrand. The Feynman integral is said to be superficially UV divergent if $\omega(\Gamma)\ge0$, while a scaleless graph is said to be superficially IR-divergent, i.e.\ it diverges in the limit of small loop-momenta, if $\omega(\Gamma)\le0$. Here we refer to a \emph{vacuum graph} as a Feynman graph which has no dependence on external momenta; a \emph{scaleless graph} is a vacuum graph which also has no internal masses.

The $R$-operation acting on a Feynman graph $\Gamma$, is then defined as follows:
\begin{align} \label{eq:defR} R(\Gamma):= \sum_{\gamma} Z(\gamma)*\Gamma / \gamma, \end{align}
where we sum over all \textit{divergent UV-subgraphs} $\gamma$. We will call a, possibly disconnected, subgraph a \textit{UV-subgraph} if each of its connected components is one-particle-irreducible (1PI). The UV-subgraph is a \textit{divergent UV-subgraph} if each of its connected components is UV superficially divergent. The empty subgraph is also considered to be a divergent UV-subgraph.

We can write any subgraph $\gamma$ appearing in the sum as a disjoint union of its connected components: $\gamma= \bigsqcup_i \delta_i$ such that all $\delta_i$ are UV superficially divergent 1PI Feynman graphs, $\omega(\delta_i) \geq 0$. The contracted graph $\Gamma/\gamma$ represents the contraction of each component $\delta_i$ of $\gamma$ to a single vertex in $\Gamma$. The counterterm operation $Z$ for IR finite graphs $\Gamma$ is given by
\begin{align} Z(\Gamma)= -K\left( \sum_{\gamma \neq \Gamma} Z(\gamma)*\Gamma / \gamma \right), \end{align}
where we sum over all \textit{divergent UV-subgraphs} $\gamma$ which are not equal to $\Gamma$ and the operation $K$ isolates the divergent part of the integral by some scheme-dependent prescription. A common such prescription on which most of our discussion is based is minimal subtraction (MS). In the MS scheme $K$ projects out simple and higher poles in the dimensional regulator $\eps=(4-D)/2$, with $D$ the dimension of spacetime.
The $Z$-operation is multiplicative
\begin{align*} Z(\delta_1 \delta_2)=Z(\delta_1)Z(\delta_2), \end{align*}
this property is central for any Hopf-algebraic interpretation of renormalisation.

Another ingredient is the $*$-symbol which is used to indicate insertion of $Z(\delta_i)$ into the vertex, into which $\delta_i$ got contracted, in $\Gamma/\gamma$. In general, $Z(\delta)$ is not only a number, but a homogeneous polynomial in the external momenta of the graph $\delta$ of degree $\omega(\delta)$. In this more general case the $*$-symbol is used to indicate that care must be taken as information of the external momenta of the subgraph $\delta$ must be extracted from the contracted graph $\Gamma/\delta$. In the case $\omega(\delta)=0$ of a logarithmic subdivergence $Z(\delta)$ is just a number and the $*$ can be interpreted as an ordinary multiplication. Finally one defines $Z(\emptyset)=1$ and $\Gamma/\Gamma=1$. 

The $R$-operation then subtracts counterterms associated to divergent UV-subgraphs, which render the UV divergent Feynman graph finite. Let us illustrate this procedure with two simple examples:
\begin{align*}  R \l(\bubbleonenum{1}{2}\r) &= 1 * \bubbleonenum{1}{2} +Z\l( \bubbleonenum{1}{2}\r)*1\,, \\ &\nn\\  R \l(\bubbletwonum{1}{2}{3}{4}\r) &= 1*\bubbletwonum{1}{2}{3}{4}+Z \l(\bubbletwonum{1}{2}{3}{4}\r)*1+Z \l(\bubbleonenum{2}{3}\r) *\bubbleonenum{1}{4}\,,    \end{align*}
where we only have UV finite expressions on the right hand side. Later in the paper, we will also give examples for the explicit evaluation of the counterterms.

\subsection{\texorpdfstring{The $\t{R}$-operation and divergent IR-subgraphs}{The R-tilde-operation and divergent IR-subgraphs}}

\label{setup:Rstar}
While the $R$-operation renders a purely UV divergent Feynman integral finite, its IR counterpart the $\t{R}$-operation renders a purely IR divergent Feynman integral finite. It is defined analogously as 
\begin{align} \label{eq:Rtilde} \t{R}(\Gamma):= \sum_{\t{\gamma}} \t{Z}(\t{\gamma})*\Gamma \backslash \t{\gamma}, \end{align}
where we sum over all \textit{divergent contracted IR-subgraphs} $\t \gamma$ and $\Gamma \backslash \t{\gamma} = \bar{\gamma}$, the \textit{complementary graph}, is the graph $\Gamma$ with all vertices and edges contained in $\t{\gamma}$ deleted. The definition of (contracted) IR-subgraphs will be discussed in detail below.

The IR counterterm $\t{Z}({\Gamma})$ for UV finite graphs, which is non-vanishing only for scaleless graphs $\Gamma$, is given by
\begin{align*} \t{Z}(\Gamma)= -K\Big(\sum_{\t{\gamma}\neq \Gamma} \t{Z}(\t{\gamma})*\Gamma \backslash \t{\gamma}\Big). \end{align*}
Contrary to the UV case, $\t Z(\Gamma)$ is, in general, a homogeneous polynomial in derivative operators $\partial/\partial p_i^\mu(\bullet)|_{p_i=0}$ of degree $-\omega(\Gamma)$ acting on $\Gamma \backslash \t{\gamma}$. This point is discussed in length with many examples in \cite{Herzog:2017bjx}. The $*$ symbol therefore represents the action of this Taylor operator onto the remaining graph. For the special case of vanishing superficial degree of convergence $\omega(\Gamma)=0$, which receives most of our attention, the $*$ symbol however simply reduces to standard multiplication.

\paragraph{Definition of IR-subgraphs}
Roughly following \cite{Chetyrkin:1984xa}, let us consider a subgraph $\gamma'$ which consists of a subset of the internal massless edges of $\Gamma$ and all vertices of $\Gamma$ which are incident to only these edges.  Associated to $\gamma'$ is its \textit{complementary graph} $\bar{\gamma}=\Gamma\setminus\gamma'$ which consists of the original graph $\Gamma$ with all the edges and vertices of the subgraph $\gamma'$ deleted. In turn the \textit{contracted graph} is then defined as $\t{\gamma}=\Gamma/\bar{\gamma}$. 
Note that $\gamma'$ and $\t{\gamma}$ have the same edge sets. These definitions are best illustrated in the following example:
\begin{align} \SetScale{0.3} \Gamma&= \SetScale{0.3}\fullext \quad \supset \quad \gamma'= \SetScale{0.6}\franzonevertex,\nn\\ \bar{\gamma}= \Gamma \setminus \gamma^{'}&\SetScale{0.6}= \franzone, \hspace{0.7cm} \t{\gamma}= \Gamma/\bar{\gamma} = \Gamma/(\Gamma \setminus \gamma^{'} )=\SetScale{0.3}\subgraphR14irdiv . \nn \end{align}
Note that the missing (drawn as hollow) vertices of $\gamma'$ are chosen in exactly the way that ensures that $\bar \gamma$ is a proper Feynman graph and that we omitted the incoming legs in the contracted graph $\t \gamma$ as its legs are all incident to the same vertex which is analytically equivalent to no external legs by momentum conservation.
In the following, we will use the notation $\gamma'$, $\bar{\gamma}$ and $\t{\gamma}$ for the respective associated graphs.
For a subgraph $\gamma'$ to be an \textit{IR-subgraph} the following conditions need to be fulfilled:
\begin{enumerate}
 \item [(i)]$\bar \gamma$ contains all massive edges of $\Gamma$,
 \item [(ii)]$\gamma'$ contains no external vertices of $\Gamma$,
 \item [(iii)]each connected component of $\bar \gamma$ is 1PI after identifying\footnote{Two vertices $v_i$ and $v_j$ are identified if they are replaced by a new vertex $v_*$ such that all the edges incident on $v_i$ and $v_j$ are now incident on the new vertex $v_*$.} 
 its external vertices and contracting all massive edges.  
\end{enumerate}
Similar conditions have been first formulated in \cite{Chetyrkin:1984xa}.

In analogy to the concept of 1PI graphs in the UV case, we define an IR-subgraph $\gamma'$ to be \textit{infrared-irreducible} (IRI) if $\t{\gamma}$ is \textit{biconnected} that means it cannot be disconnected by deleting any one vertex. In the physics literature the term \emph{1-vertex-irreducible} is also occasionally used instead of biconnected which is an established notion in the mathematical literature. Any contracted IR-subgraph $\t{\gamma}$ decomposes into a set of mutually edge-disjoint complementary graphs $\t{\gamma} = \t{\delta_1}\cup\cdots\cup\t{\delta_n}$ where each of the associated IR-subgraphs $\delta_1', \ldots, \delta_n'$ is IRI. We say an IR-subgraph $\gamma'$ is divergent if each of the components of the complementary graph $\t{\delta_1}, \ldots, \t{\delta_n}$ is superficially IR divergent, $\omega(\t\delta_k) \leq 0$.

A simple example for the $\t R$-operation is then given by
\vspace*{-0.4cm}
\begin{align*} \t{R}(\SetScale{0.3}\fullext)=\t{Z}(\SetScale{0.3}\subgraphR14irdiv) * \SetScale{0.6}\franzone + 1*\SetScale{0.3}\fullext\, ,\\\nn \vspace*{-0.1cm} \end{align*}
where now the right hand side is IR finite. Later, we will discuss how the counterterm can be evaluated explicitly.
\subsection{\texorpdfstring{The $R^*$-operation}{The R*-operation}}
Having discussed both IR- and UV-subgraphs we are now in a position to define the $R^*$-operation, which subtracts divergences associated to both of them and is valid for graphs that have both, UV and IR divergences:
\begin{align} \label{eq:Rstar} R^*(\Gamma):= \sum_{\substack{\gamma, \t{\gamma}\\ \gamma \cap \t{\gamma} = \emptyset}} \t{Z}(\t{\gamma})*Z(\gamma)*\Gamma /\gamma \backslash \t{\gamma}, \end{align}
where the sum goes over all non-overlapping pairs of divergent UV- and contracted IR-subgraphs $(\gamma,\t{\gamma})$. The requirement that divergent IR- and UV-subgraphs shall not overlap follows directly from the fact that the loop momentum associated to a given edge may not be both large and small at the same time. In the presence of both UV and IR divergences the UV  and IR counterterm operations are generalised by:
\begin{align} \label{eq:defZRstar} Z(\Gamma):= -K\Big( \sum_{\substack{\gamma \neq \Gamma,\t{\gamma}\\ \gamma \cap \t{\gamma} = \emptyset}} \t{Z}(\t{\gamma})*Z(\gamma)*\Gamma /\gamma\backslash \t{\gamma}\Big)\,,\\ \label{eq:deftZRstar} \t Z(\Gamma):=-K\Big( \sum_{\substack{\gamma,\t{\gamma} \neq \Gamma\\ \gamma \cap \t{\gamma} = \emptyset}} \t{Z}(\t{\gamma})*Z(\gamma)*\Gamma /\gamma\backslash \t{\gamma}\Big)\,, \end{align}
where $\Gamma$ is now allowed to have both IR and UV subdivergences.
As an example of the $R^*$-operation we consider the log-divergent graph
\vspace*{0.1cm}
\begin{align*} \Gamma&= \bubbletwodotnumc{1}{2}{3}{4}, \end{align*}
\vspace*{0.1cm}
this graph contains the UV-subgraphs $\gamma_1,\gamma_2$ and the contracted IR-subgraph $\t{\gamma}$
\vspace*{0.1cm}
\begin{gather*} \gamma_1=\bubbletwodotnumc{1}{2}{3}{4},\hspace{1cm} \gamma_2 =\bubbleonenum{1}{4}, \hspace{1cm}\t{\gamma}=\vaconenum{2}{3}. \end{gather*}
\vspace*{0.1cm}
This leads to the following expression of $R^*(\Gamma)$:\\
\vspace*{-0.4cm}
\begin{align*} R^*\bubbletwodotnumc{1}{2}{3}{4}&=&&1*1*\bubbletwodotnumc{1}{2}{3}{4}+\t Z \left( \vaconenum{2}{3} \right)*1*\bubbleonenum{1}{4}+\t Z \left( \vaconenum{2}{3} \right)* Z\l(\bubbleonenum{1}{4}\r)*1\nn\\ &&&\nn\\ &&& +1*Z\l(\bubbleonenum{1}{4}\r)*\tadpolenumrem{2}{3}+1*Z\l(\bubbletwodotnumc{1}{2}{3}{4}\r)*1 \\ &&&\nn\\ &=&&\bubbletwodotnumc{1}{2}{3}{4}+\t Z \left( \vaconenum{2}{3} \right) \bubbleonenum{1}{4}+\t Z \left( \vaconenum{2}{3} \right) Z\l(\bubbleonenum{1}{4}\r) +Z\l(\bubbletwodotnumc{1}{2}{3}{4}\r)\,,\nn  \end{align*}
where we have carried out all the $*$-operations in the last line, which resulted in trivial multiplications, because all divergences were only of logarithmic degree. Note that the fourth term in the first line vanishes since its contracted graph is scaleless and scaleless graphs are assumed to vanish under the application of the Feynman rules (see discussion in Section~\ref{setup:IRR}).

\subsection{Infrared rearrangement}
\label{setup:IRR}

The primary practical purpose of the $R^*$-operation is to extend the \textit{infrared rearrangement} procedure to infrared divergent graphs. The basic idea of an infrared rearrangement is to simplify the calculation of the UV counterterm in the MS-scheme by first setting all external momenta and internal masses to zero and then reinserting two external momenta in such a way as to maximally simplify the resulting Feynman integral. It was first observed by Vladimirov that this technique can drastically simplify the evaluation of IR finite graphs \cite{Vladimirov:1979zm}. This procedure works, because the UV counterterm of a log-divergent graph in the MS-scheme is independent of the external momenta or internal masses \cite{Collins:1974da,Caswell:1981ek}. For graphs with higher superficial degree of divergence, the MS UV counterterm can still be obtained from log-divergent ones by Taylor expanding the integrand before integration. Since the validity of the infrared rearrangement procedure is an essential ingredient to our considerations, we state it here as an explicit assumption on the given set of Feynman rules, renormalisation and regularisation scheme:

\begin{assumption}
\label{asp:firstIR}
If $\Gamma$ is a logarithmically divergent Feynman diagram and $\Lambda$ carries the same topology and propagator powers as $\Gamma$, but does not have any external legs or internal masses and is therefore scaleless, then $\Gamma$ and $\Lambda$ evaluate to the same UV counterterm:
\beq
Z(\Gamma) = Z(\Lambda)\,.
\eeq
\end{assumption}
It is well-known that this assumption is fulfilled in the minimal subtraction scheme. However, we are not aware of an explicit proof.

We further assume that scaleless Feynman graphs $\Lambda$ vanish under the application of the Feynman rules $\phi(\Lambda) = 0$. In dimensional regularisation it has been shown that this can be seen as a coherent definition of the value of the Feynman integral of a scaleless graph \cite{Leibbrandt:1975dj}. Note that $\phi(\Gamma)=0$ neither implies nor is implied by $Z(\Gamma) =0$ or $\t Z(\Gamma) = 0$. In practice, one is always forced to reintroduce a scale into a graph when computing its UV counterterm $Z(\Lambda)$. A choice which results into a particularly simple Feynman integral is the insertion of external momenta into two vertices which are connected by a single edge. The calculation of an $L$-loop UV counterterm can then be extracted from massless propagator-type integrals of at most $L-1$ loops. This trick is originally due to Chetyrkin and Tkachov \cite{Chetyrkin:1982nn}. Its essence is the following identity on Feynman integrals which only have incoming momenta on two adjacent vertices:
\begin{align*} \begin{tikzpicture}[baseline={([yshift=-.6ex]current bounding box.center)}] \coordinate (v0); \coordinate[right=1 of v0] (v1); \coordinate[above=.5 of v0] (v0u); \coordinate[below=.5 of v0] (v0d); \coordinate[below=.2 of v0] (v0m); \coordinate[above=.5 of v1] (v1u); \coordinate[below=.5 of v1] (v1d); \coordinate[below=.2 of v1] (v1m); \coordinate[below left=.3 of v0] (v0e); \coordinate[below right=.3 of v1] (v1e); \draw (v0) .. controls (v0d) and (v1d) .. (v1); \filldraw[preaction={fill,white},pattern=north east lines] (v0) .. controls (v0u) and (v1u) .. (v1) .. controls (v1m) and (v0m) .. (v0); \filldraw (v0) circle(1pt); \filldraw (v1) circle(1pt); \draw (v0) -- (v0e); \draw (v1) -- (v1e); \node[below left] at (v0e) {$Q$}; \node[below right] at (v1e) {$-Q$}; \end{tikzpicture} &=\int \frac{d^Dk}{\pi^{D/2}} \left( \frac{1}{(k+Q)^2} \begin{tikzpicture}[baseline={([yshift=-.6ex]current bounding box.center)}] \coordinate (v0); \coordinate[right=1 of v0] (v1); \coordinate[above=.5 of v0] (v0u); \coordinate[below=.5 of v0] (v0d); \coordinate[below=.2 of v0] (v0m); \coordinate[above=.5 of v1] (v1u); \coordinate[below=.5 of v1] (v1d); \coordinate[below=.2 of v1] (v1m); \coordinate[below left=.3 of v0] (v0e); \coordinate[below right=.3 of v1] (v1e); \filldraw[preaction={fill,white},pattern=north east lines] (v0) .. controls (v0u) and (v1u) .. (v1) .. controls (v1m) and (v0m) .. (v0); \filldraw (v0) circle(1pt); \filldraw (v1) circle(1pt); \draw (v0) -- (v0e); \draw (v1) -- (v1e); \node[below left] at (v0e) {$k$}; \node[below right] at (v1e) {$-k$}; \end{tikzpicture} \right) \\ &= \left[ \begin{tikzpicture}[baseline={([yshift=-.6ex]current bounding box.center)}] \coordinate (v0); \coordinate[right=1 of v0] (v1); \coordinate[above=.5 of v0] (v0u); \coordinate[below=.5 of v0] (v0d); \coordinate[below=.2 of v0] (v0m); \coordinate[above=.5 of v1] (v1u); \coordinate[below=.5 of v1] (v1d); \coordinate[below=.2 of v1] (v1m); \coordinate[below left=.3 of v0] (v0e); \coordinate[below right=.3 of v1] (v1e); \filldraw[preaction={fill,white},pattern=north east lines] (v0) .. controls (v0u) and (v1u) .. (v1) .. controls (v1m) and (v0m) .. (v0); \filldraw (v0) circle(1pt); \filldraw (v1) circle(1pt); \draw (v0) -- (v0e); \draw (v1) -- (v1e); \node[below left] at (v0e) {$k$}; \node[below right] at (v1e) {$-k$}; \end{tikzpicture} \right]_{k^2 =1 } \int \frac{d^Dk}{\pi^{D/2}} \frac{1}{(k+Q)^2(k^2)^{1+(L-1)\eps}}\nn   \end{align*}
where the dashed region stands for an arbitrary (sub)graph and we have used dimensional analysis to get to the final equality. In the last line we are only left with an $(L-1)$-loop propagator-type Feynman integral and an integral that can be evaluated using $\Gamma$-functions.

\subsection{Motic subgraphs}
\label{setup:Motic}

Brown \cite{Brown:2015fyf} introduced the notion of motic subgraphs which correspond to singularities of the Feynman integrals in \emph{parametric space}.
Conceptually, motic graphs are a natural generalisation of 1PI graphs \cite[Thm 3.6]{Brown:2015fyf}.

Recall that every subset of edges $E_\gamma \subset E_\Gamma$ of a graph $\Gamma$ gives rise to a distinguished subgraph $\gamma$. The set of subgraphs therefore is in bijection with the set of subsets of edges. Brown declares the subgraph $\gamma \subset \Gamma$ to inherit the external momenta and masses of the parent graph in a non-trivial way. To do so he first distinguishes between different types of subgraphs. A subgraph $\gamma$ is
\begin{itemize}
\item \textit{momentum-spanning} in $\Gamma$ if it has a connected component which contains all external vertices of the parent graph.
\item \textit{mass-spanning} in $\Gamma$ if it contains all massive propagators of the parent graph.
\item \textit{mass-momentum-spanning} in $\Gamma$ if it is both mass- and momentum-spanning.
\end{itemize}
Because of the general recursive nature of renormalisation problems, we also need to consider \textit{subgraphs of subgraphs} and extend the above categorisation to those. This is not trivial because we need to clarify how external vertices and masses are inherited from parent graph to subgraph. 

Brown defines, that a subgraph inherits all masses and external momenta of the parent graph if and only if it is mass-momentum-spanning. If the subgraph is not mass-momentum-spanning, it inherits no external momenta and has zero mass at each edge. Note that a slightly counter-intuitive situation arises from this: if $\gamma\subset \Gamma$ is a subgraph which is not mass-momentum-spanning in $\Gamma$, then all subgraphs $\mu \subset \gamma$ are mass-momentum-spanning in $\gamma$. Recall that a scaleless graph does not depend on any external momenta or masses. Therefore, every subgraph of a scaleless graph is mass-momentum-spanning. 

A subgraph $\gamma \subset \Gamma$ is then defined to be \textit{motic} if, for every proper subgraph of the subgraph $\mu \subsetneq \gamma$, which is mass-momentum-spanning in $\gamma$, $\mu$ has strictly less loops than $\gamma$.

Note that every 1PI subgraph is also motic, because it only has proper subgraphs which have strictly less loops.

Consider for instance the graph
\begin{align*} \SetScale{0.7} \moticexample \end{align*}
where we have labeled the internal edges with numbers and massive propagators are depicted as double edges.
It has the following set of six motic subgraphs:
\begin{gather*} \SetScale{0.7} \Bigg\{ \emptyset, \SetScale{0.7}\moticexample, \moticexamplea, \moticexampleac, \SetScale{0.7}\moticexampleab, \moticexamplebc\Bigg\} \,, \end{gather*}
where the first subgraph is the empty graph which is not mass-momentum-spanning. 
The following four subgraphs are mass-momentum-spanning since they contain all massive propagators and external momenta. 
The last subgraph is not mass-momentum spanning, because it does not contain all masses and external momenta of the parent graph. Explicitly, it does not contain the massive edge $1$.

\subsection{Hopf algebra of motic subgraphs}

Brown \cite{Brown:2015fyf} used his definition of motic graphs to introduce a Hopf algebra, which generalises the Connes-Kreimer Hopf algebra of renormalisation \cite{Connes:1999yr}. In Section~\ref{sec:R*Hopf}, we will show that this Hopf algebra can be used to describe the combinatorics of the $R^*$-operation. In this setting the $R$-operations and their variations then act as \emph{linear operators} motivating the terminology $R$-operator, which we shall use hereafter.

For a detailed review on renormalisation Hopf algebras we refer the reader to \cite{Manchon:2001bf} for a more algebraic focus and to \cite[Chap.\ 5]{borinsky2018graphs} with a focus on the combinatorial aspects. An introduction to general Hopf-algebraic structures is given in Sweedler's book \cite{sweedler1969hopf}. %

Readers, who are unfamiliar with Hopf algebras but familiar with the $R$-operator, are advised to already look at the examples in Section~\ref{sec:examples_hopf_basic} while reading the following sections and observe the similarity of the Hopf algebraic and the traditional $R$-operators.

Following Brown, we define $\M$ to be a vector space over $\Q$ which is spanned by all motic graphs. An element of $\M$ is therefore always a formal sum over graphs with rational coefficients. Graphs are considered identical if they are isomorphic to each other. As in the Connes-Kreimer Hopf algebra, the multiplication of two elements of $\M$ is defined using the disjoint union of graphs. The multiplication is the unique linear map, \begin{align*} m_\M : \M \otimes \M \to \M, \end{align*}
such that $m_\M( \Gamma_1 \otimes \Gamma_2 ) = \Gamma_1 \sqcup \Gamma_2$ for two graphs $\Gamma_1$ and $\Gamma_2$. We can immediately identify the empty graph as the neutral element under multiplication. It will be denoted as $\one := \emptyset$. With this multiplication the vector space $\M$ becomes an \textit{algebra}.

A coproduct on $\M$ is also defined on graphs,
\begin{align} \label{coproduct1} \Delta_\M (\Gamma) := \sum_{\gamma \subset \Gamma} \gamma \otimes \Gamma/\gamma, \end{align}
where we sum over all motic subgraphs $\gamma$ of $\Gamma$.
The complete contraction $\Gamma/\Gamma$ appearing in the sum is identified with the empty graph or equivalently with the element $\one$.  This definition can again be extended linearly to a map on the full vector space $\Delta_\M:\; \M \to \M \otimes \M$. With the comultiplication $\M$ is now also a \textit{coalgebra}.

A unit on the algebra is defined as the linear map, $u_\M: \Q \to \M$, $q \mapsto q \one$. The counit $\counit_\M: \M \to \Q$ projects onto the element $\one$ of the algebra. It is defined on graphs as
\begin{align*} \counit_\M(\Gamma) &= \begin{cases} 1 \quad \textrm{ if } \Gamma = \emptyset = \one \\ 0 \quad \textrm{ else }\\ \end{cases} \end{align*}
and extended linearly for all elements in $\M$.

An important property of the coproduct is that it is compatible with the multiplication and the counit,
\begin{align*} \Delta_\M \circ m_\M &= m_{\M \otimes \M} \circ ( \Delta_\M \otimes \Delta_\M ) \\ (\id \otimes \counit_\M) \circ \Delta_\M &= (\counit_\M \otimes \id) \circ \Delta_\M = \id, \end{align*}
where $m_{\M \otimes \M}$ is the natural multiplication on the tensor product $\M \otimes \M$ defined on pairs of graphs by $m_{\M \otimes \M}( (\gamma_1 \otimes \mu_1) \otimes (\gamma_2 \otimes \mu_2) ) = m_\M(\gamma_1 \otimes \gamma_2) \otimes m_\M( \mu_1 \otimes \mu_2)$ and  $\id$ is the identity map on $\M$.

Another important property is the coproduct's \textit{coassociativity},
\begin{align*} ( \id \otimes \Delta_\M ) \circ \Delta_\M = ( \Delta_\M \otimes \id ) \circ \Delta_\M, \end{align*}

Moreover, $\M$ is \textit{connected} and therefore there exists an \textit{antipode}, a linear map $S : \M \to \M$ which fulfills,
\begin{align*} \label{eq:Sdef} m_\M \circ ( S \otimes \id ) \circ \Delta_\M = m_\M \circ ( \id \otimes S ) \circ \Delta_\M = \unit_\M \circ \counit_\M, \end{align*}
such that $\M$ becomes a \textit{Hopf algebra}.
Brown proved these algebraic statements for $\M$ in \cite[Thm.\ 4.2]{Brown:2015fyf}. 

The definition of $S$, eq.\ \eqref{eq:Sdef}, gives rise to a recursive formula for the action of $S$ on graphs $\Gamma \neq \one$
\begin{align} S(\Gamma)= - \Gamma - \sum_{\emptyset \neq \gamma \subsetneq\Gamma}S(\gamma) \Gamma/\gamma= - \Gamma - \sum_{\emptyset \neq \gamma \subsetneq\Gamma}\gamma S(\Gamma/\gamma), \end{align}
which terminates with $S(\one) = \one$.
The convergence of this recursive formula is ensured because $\M$ is \textit{graded} by both the loop-number and the number of edges of the graphs.
We give an example of the antipode of a Feynman graph with one massive propagator:
\vspace{-0.2cm}
\begin{gather*} S\Big(\trianglefull\Big)= -\trianglefull - S\Big(\1subgraph1\Big)\Tsubgraph1 -S\Big(\2subgraph2\Big)\TTsubgraph2 \nn\\ - S\Big(\3subgraph3\Big)\TAsubgraph3 -S\Big(\Ttriangle1\Big)\Tttriangle2 -S\Big(\2subgraph2 \Ttriangle1\Big) \TAsubgraph3\\ =-\trianglefull - \Big(-\1subgraph1-\2subgraph2\TSsubgraph4\Big)\Tsubgraph1 +\2subgraph2\TTsubgraph2\nn\\ - \Big(-\3subgraph3-\2subgraph2\TOsubgraph4\Big)\TAsubgraph3 + \Ttriangle1\Tttriangle2 -\2subgraph2 \Ttriangle1 \TAsubgraph3.\nn \end{gather*}
Note that due to the motic property, there is always one unique subgraph that carries the external momentum information of the original graph in the disjoint union of graphs as indicated by the legs connected to the graph. This component is also the one that has to contain the massive edges.

\section{\texorpdfstring{Hopf-algebraic formulation of $R^*$}{Hopf-algebraic formulation of R*}}
\label{sec:R*Hopf}

In this section we reformulate the $R^*$-operation in the language of Brown's Hopf algebra of motic graphs. As reviewed in section \ref{setup:Rstar}, UV divergences are associated with UV-subgraphs which are essentially disjoint unions of 1PI graphs with a superficial divergence criterion, while IR divergences, within the context of $R^*$, are associated with IR-subgraphs, which are more complicated. An alternative arguably simpler definition then the one in Section~\ref{setup:Rstar} of IR-subgraphs can be given in the language of motic subgraphs which we reviewed in the previous section.

\subsection{IR- and motic mass-momentum-spanning subgraphs}

A key observation which we will prove in the following section, is that the set of motic and mass-momentum-spanning subgraphs of a given graph is in one-to-one correspondence with the set of IR-subgraphs. A motic and mass-momentum-spanning subgraph does not directly give rise to an IR-subgraph, but its complementary graph does.

\begin{theorem}
    \label{thm:IRandMotic}
A subgraph $\gamma \subset \Gamma$ is motic and mass-momentum-spanning in $\Gamma$, if and only if its complementary graph is an IR-subgraph of $\Gamma$.
\end{theorem}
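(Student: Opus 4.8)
My plan is to observe that both an IR-subgraph and a motic subgraph are specified purely by a subset of edges, so the correspondence in the theorem is just edge-complementation: given $\gamma\subset\Gamma$ with edge set $E_\gamma$, I let $\gamma'$ be the subgraph carrying the complementary edges $E_\Gamma\setminus E_\gamma$ together with the vertices of $\Gamma$ incident only to those edges. I would first check that the vertex conventions are consistent, so that the complementary graph $\bar\gamma=\Gamma\setminus\gamma'$ is literally $\gamma$ again (same edges, same incident vertices), and that the external vertices of $\bar\gamma$ featuring in condition (iii)---the external vertices of $\Gamma$ lying in $\bar\gamma$ together with the attachment vertices where the deleted edges of $\gamma'$ met $\bar\gamma$---are exactly the external vertices of the subgraph $\gamma$ in Brown's sense. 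With this dictionary the three IR-subgraph conditions can be matched one at a time to the two properties motic and mass-momentum-spanning.

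The two spanning conditions translate almost immediately. Since $\bar\gamma$ and $\gamma$ carry the same edges, $\gamma$ is mass-spanning in $\Gamma$ precisely when $\bar\gamma$ contains every massive edge, which is condition (i). Likewise $\gamma$ is momentum-spanning exactly when every external vertex of $\Gamma$ lies in $\bar\gamma$, equivalently when none is deleted into $\gamma'$, which is condition (ii); the forward direction is immediate, while the converse---that (ii) really returns all external vertices to a single hard component---is where one uses that $\bar\gamma$ must be a genuine momentum-carrying Feynman graph. Thus mass-momentum-spanning is equivalent to (i) together with (ii), and it remains to identify the motic property with condition (iii).

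This last equivalence is the heart of the argument. Both the motic property and condition (iii) are additive over connected components (a proper mass-momentum-spanning subgraph of equal loop number must already live inside a single component, and loop numbers add), so it suffices to treat one connected component $G$ with its inherited external vertices. For $G$ I would pass to the processed graph $\hat G$ obtained by contracting all massive edges and identifying all external vertices, and establish the key bookkeeping fact: since every mass-momentum-spanning subgraph $\mu\subseteq G$ contains the same massive edges and keeps its external vertices in one component, the two operations change the loop number of $\mu$ and of $G$ by the same amount, so that $L(\mu)<L(G)$ if and only if $L(\hat\mu)<L(\hat G)$. Under this translation the motic property becomes the statement that every proper mass-momentum-spanning subgraph of $\hat G$ has strictly fewer loops, and I claim this holds if and only if $\hat G$ is 2-edge-connected, i.e.\ 1PI.

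I would finish by proving this 2-edge-connectivity criterion in both directions. If $\hat G$ is 1PI then no edge is a bridge, so any proper subgraph $\hat\mu$ omits a non-bridge edge $\hat e$ and hence $L(\hat\mu)\le L(\hat G\setminus\hat e)=L(\hat G)-1<L(\hat G)$ by monotonicity of the loop number; thus $G$ is motic. Conversely, a bridge $\hat e$ of $\hat G$ is the image of a non-massive edge $e$ of $G$, and because deleting a bridge preserves the loop number while leaving the identified external vertex $v_*$ in one part, the subgraph $\mu=G\setminus e$ is a proper mass-momentum-spanning subgraph with $L(\mu)=L(G)$, so $G$ fails to be motic. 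I expect the main obstacle to be exactly this loop-number bookkeeping: one must verify carefully that contracting the massive edges and identifying the external vertices affect $L$ identically for $G$ and for each of its mass-momentum-spanning subgraphs, and that a bridge sitting at the identified external vertex cannot secretly separate two genuine external vertices of $\Gamma$ (which would destroy momentum-spanning of $\mu$). These connectivity checks, rather than the elementary loop counting, are where the real care is needed.
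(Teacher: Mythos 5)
Your proposal is correct and follows essentially the same route as the paper's proof: conditions (i)--(ii) are matched to mass-momentum-spanning, and the motic property is matched to condition (iii) by observing that, once the external vertices are identified (and the massive edges contracted), an edge whose removal preserves both the loop number and mass-momentum-spanning-ness is precisely a bridge of the processed graph. Your write-up is in fact somewhat more explicit than the paper's, which never discusses the massive-edge contraction and leaves the loop-number bookkeeping (your lemma that the processing shifts $L(\mu)$ and $L(G)$ by the same amount) implicit.
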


\begin{proof}
\begin{figure}%
\begin{subfigure}[b]{0.5\textwidth}%
\centering
\begin{tikzpicture} \coordinate (va); \coordinate[right=1 of va] (vb); \coordinate[right=1 of vb] (vc); \coordinate[right=.2 of va] (ca); \coordinate[left=.2 of vb] (cb1); \coordinate[right=.2 of vb] (cb2); \coordinate[left=.2 of vc] (cc); \coordinate[below=.7 of ca] (cca); \coordinate[below=.7 of cb1] (ccb1); \coordinate[below=.7 of cb2] (ccb2); \coordinate[below=.7 of cc] (ccc); \coordinate[below=2 of va] (ccca); \coordinate[below=2 of vc] (cccc); \coordinate[below=1 of vb] (da); \coordinate[right=2 of da] (db); \draw[color=white] (da) circle(1); \draw[color=white] (db) circle(1); \draw (da) -- (db); \draw[preaction={fill,white},pattern=north east lines] (db) circle(.5); \begin{pgfinterruptboundingbox} \filldraw[preaction={fill,white},pattern=north east lines] (va) .. controls (cca) and (ccb1) .. (vb) .. controls (ccb2) and (ccc) .. (vc) .. controls (cccc) and (ccca) .. (va); \end{pgfinterruptboundingbox} \newcommand\Square[1]{+(-#1,-#1) rectangle +(#1,#1)} \draw (va) \Square{2pt}; \draw (vb) \Square{2pt}; \draw (vc) \Square{2pt}; \filldraw (va) circle(1pt); \filldraw (vb) circle(1pt); \filldraw (vc) circle(1pt); \end{tikzpicture}%
\begin{tikzpicture} \coordinate (va); \coordinate[right=1 of va] (vb); \coordinate[right=1 of vb] (vc); \coordinate[right=.2 of va] (ca); \coordinate[left=.2 of vb] (cb1); \coordinate[right=.2 of vb] (cb2); \coordinate[left=.2 of vc] (cc); \coordinate[below=.7 of ca] (cca); \coordinate[below=.7 of cb1] (ccb1); \coordinate[below=.7 of cb2] (ccb2); \coordinate[below=.7 of cc] (ccc); \coordinate[below left=2 of va] (ccca); \coordinate[below right=2 of vc] (cccc); \coordinate[below=.7 of vb] (da); \coordinate[right=2 of da] (db); \draw[color=white] (da) circle(1); \draw[color=white] (db) circle(1); \draw (da) -- (db); \draw[preaction={fill,white},pattern=north east lines] (db) circle(.5); \begin{pgfinterruptboundingbox} \filldraw[preaction={fill,white},pattern=north east lines] (vb) .. controls (cca) and (ccb1) .. (vb) .. controls (ccb2) and (ccc) .. (vb) .. controls (cccc) and (ccca) .. (vb); \end{pgfinterruptboundingbox} \newcommand\Square[1]{+(-#1,-#1) rectangle +(#1,#1)} \draw (vb) \Square{2pt}; \filldraw (vb) circle(1pt); \end{tikzpicture}%
\subcaption{Mass-momentum spanning, but not motic}%
\label{fig:non_motic}
\end{subfigure}%
\begin{subfigure}[b]{0.5\textwidth}%
\centering
\begin{tikzpicture} \coordinate (va); \coordinate[right=.5 of va] (vm); \coordinate[right=1 of va] (vb); \coordinate[right=.2 of va] (ca); \coordinate[left=.2 of vb] (cb1); \coordinate[right=1 of vb] (vc); \coordinate[below=.7 of ca] (cca); \coordinate[below=.7 of cb1] (ccb1); \coordinate[below=2 of va] (ccca); \coordinate[below=2 of vb] (cccb); \coordinate[below=2 of vc] (ccc); \coordinate[left=1 of ccc] (ccc1); \coordinate[right=1 of ccc] (ccc2); \coordinate[below=1 of vm] (da); \coordinate[below=1 of vc] (db); \draw[color=white] (da) circle(1); \draw[color=white] (db) circle(1); \draw (da) -- (db); \begin{pgfinterruptboundingbox} \filldraw[preaction={fill,white},pattern=north east lines] (va) .. controls (cca) and (ccb1) .. (vb) .. controls (cccb) and (ccca) .. (va); \filldraw[preaction={fill,white},pattern=north east lines] (vc) .. controls (ccc1) and (ccc2) .. (vc); \end{pgfinterruptboundingbox} \newcommand\Square[1]{+(-#1,-#1) rectangle +(#1,#1)} \draw (va) \Square{2pt}; \draw (vb) \Square{2pt}; \draw (vc) \Square{2pt}; \filldraw (va) circle(1pt); \filldraw (vb) circle(1pt); \filldraw (vc) circle(1pt); \end{tikzpicture}%
\begin{tikzpicture} \coordinate (va); \coordinate[right=.5 of va] (vm); \coordinate[right=1 of va] (vb); \coordinate[right=.2 of va] (ca); \coordinate[left=.2 of vb] (cb1); \coordinate[right=1 of vb] (vc); \coordinate[below=.7 of ca] (cca); \coordinate[below=.7 of cb1] (ccb1); \coordinate[below left=2 of va] (ccca); \coordinate[below=2 of vb] (cccb); \coordinate[below=2 of vc] (ccc); \coordinate[left=1 of ccc] (ccc1); \coordinate[right=1 of ccc] (ccc2); \coordinate[below=1 of vm] (da); \coordinate[right=.1 of da] (das); \coordinate[below=1 of vc] (db); \coordinate[left=.5 of db] (dbs); \draw[color=white] (da) circle(1); \draw[color=white] (db) circle(1); \draw (das) -- (dbs); \begin{pgfinterruptboundingbox} \filldraw[preaction={fill,white},pattern=north east lines] (vb) .. controls (cca) and (ccb1) .. (vb) .. controls (cccb) and (ccca) .. (vb); \filldraw[preaction={fill,white},pattern=north east lines] (vb) .. controls (ccc1) and (ccc2) .. (vb); \end{pgfinterruptboundingbox} \newcommand\Square[1]{+(-#1,-#1) rectangle +(#1,#1)} \draw (vb) \Square{2pt}; \filldraw (vb) circle(1pt); \end{tikzpicture}
\subcaption{Mass-momentum-spanning and motic}%
\label{fig:motic}
\end{subfigure}
\caption{Illustrations of the relationship between motic mass-momentum-spanning and complementary IR-subgraphs. External vertices are indicated with little squares. The dashed areas stand for arbitrary connected subgraphs.}
\label{fig:motic_vs_non_motic}
\end{figure}
    We are going to start by proving that the complementary graph $\gamma'$ to a motic and mass-momentum-spanning subgraph $\bar{\gamma} \subset \Gamma$ is an IR-divergent subgraph. 

    As $\bar{\gamma} \subset \Gamma$ is mass-momentum-spanning condition (i) and (ii) of the definition of IR-subgraphs in Section~\ref{setup:Rstar} are immediately fulfilled. 

We proceed to prove that a motic and mass-momentum-spanning subgraph fulfills condition (iii). That means it is 1PI after we identify the external vertices.
The relevant situations are illustrated in Figure~\ref{fig:motic_vs_non_motic}. Fig.~\ref{fig:non_motic} depicts a mass-momentum-spanning (sub)graph which is not motic. It is not 1PI and also does not become 1PI after we identify the external vertices. On the other hand, Fig.~\ref{fig:motic} shows a mass-momentum-spanning motic (sub)graph, which is not 1PI, but becomes 1PI after the identification of the external vertices. 

As $\bar{\gamma}$ is mass-momentum-spanning it has one connected component which contains all external vertices. If we identify all external vertices in $\bar{\gamma}$ and $\Gamma$, i.e.\ we pull them together to a single vertex, the number of connected components stays the same. Also, $\bar{\gamma}$ is still a mass-momentum-spanning subgraph of the new graph. By the definition of a motic subgraph, any removal of an edge must either decrease the loop number or make the graph non-mass-momentum-spanning. As all external vertices are identified, mass-momentum-spanning-ness cannot be destroyed anymore. Therefore every removal of an edge must decrease the number of loops of the subgraph. This is condition is equivalent to componentwise 1PIness. 

It remains to be proven that a subgraph $\bar{\gamma} \subset \Gamma$ is motic if the associated subgraph with all external vertices identified is componentwise 1PI. Suppose the opposite and consider a subgraph which is not motic and becomes componentwise 1PI after we identify the external vertices. As the graph is not motic it has an edge whose removal neither destroys mass-momentum-spanning-ness nor reduces the number of loops, see Figure~\ref{fig:non_motic}. If we identify the external vertices of the graph, the removal of this edge still does not decrease the loop number: it could only do so by splitting a loop that was created by the identification of the external vertices, but this would imply breaking mass-momentum-spanning-ness in the original graph. Therefore there is an edge in the graph with identified external vertices whose removal does not decrease the number of loops. This means the identified subgraph cannot be componentwise 1PI, which is a contradiction.
\end{proof}

\subsection{The UV and IR coactions}
Having made the identification of complementary IR-subgraphs with motic mass-momentum-span\-ning subgraphs, we are now in a position to reformulate the graph-combinatorial part of the $R^*$-operator in terms of two new coactions $\Delta^{\UV}_{\M}$ and $\Delta^{\IR}_{\M}$, which are build from the motic coproduct but are restricted to divergent UV- and IR-subgraphs respectively.

The definition of the motic UV-coaction on a Feynman graph $\Gamma$ is,
\begin{align} \label{coproduct3} \Delta_\mathcal{M}^{\text{UV}} (\Gamma) = \sum_{\gamma_{\text{UV}} \subseteq \Gamma} \gamma_{\text{UV}} \otimes \Gamma/\gamma_{\text{UV}} \hspace{8mm} \text{with} \quad \gamma_{\text{UV}}= \bigsqcup_i \delta_i,\quad \omega(\delta_i)\geq0 \end{align}
where we sum over all motic subgraphs, $\gamma_{UV}$, which are also divergent UV-subgraphs. 
This coproduct is nothing, but the Connes-Kreimer coproduct extended in its domain to all motic graphs.

The respective motic IR-coaction on a Feynman graph $\Gamma$ is,
\begin{align} \label{coproduct2} \Delta_\mathcal{M}^{\text{IR}} (\Gamma) = \sum_{\gamma_{\text{mm}} \subseteq \Gamma} \gamma_{\text{mm}} \otimes \Gamma/\gamma_{\text{mm}} \hspace{8mm} \text{with}\quad \Gamma/\gamma_{\text{mm}}= \bigsqcup_i \delta_i,\quad \omega(\delta_i)\leq0 \,, \end{align}
where we now sum over all mass-momentum-spanning (mm) motic subgraphs $\gamma_{\text{mm}}$ which are complementary to a divergent IR-subgraph. Recall that this means that the contracted graph $\Gamma/\gamma_{mm}$ is a scaleless graph which is composed of superficially IR divergent graphs joined in one vertex. Observe that the condition for $\Gamma/\gamma_{\text{mm}}$ to be scaleless is equivalent to the condition of $\gamma_{\text{mm}}$ to be mass-momentum-spanning. Let us remark further that the IR divergences are not `captured' in the left entry, but on the right entry of the tensor product. %

It is now convenient to define two maps, $\rho^{UV}$ and $\rho^{IR}$, which project onto the respective subspaces $\mathcal{H}^{\text{UV}}$ and $\mathcal{H}^{\text{IR}}$,
\begin{align*} &\rho^{UV}: \mathcal{M} \rightarrow \mathcal{H}^{UV}\hspace{2mm}\text{with}\hspace{2mm}\rho^{UV}(\Gamma)=\begin{cases} \Gamma=\bigsqcup_i \delta_i \textrm{ if all } \delta_i \text{ are 1PI and } \omega(\delta_i)\geq 0\\ 0 \textrm{ else}\\ \end{cases}\\ &\rho^{IR}: \mathcal{M} \rightarrow \mathcal{H}^{IR}\hspace{3.8mm} \text{with}\hspace{2mm}\rho^{IR}(\Gamma)=\begin{cases} \Gamma=\bigsqcup_i \delta_i \textrm{ if } \Gamma \text{ is scaleless and }\omega(\delta_i)\leq 0\\ 0 \textrm{ else}.\\ \end{cases} \end{align*}
The respective images, the two subspaces $\H^{\text{UV}}$ and $\H^{\text{IR}}$, are generated by all divergent UV-graphs and all graphs that can appear as contracted IR-subgraphs, i.e., all scaleless and componentwise superficially IR divergent graphs. Clearly, 
\begin{align*} \Delta^{\text{UV}}_\mathcal{M}&=(\rho^{\text{UV}} \otimes \id)\circ\Delta_\mathcal{M} &: &&&\mathcal{M} \rightarrow \mathcal{H}^{\text{UV}} \otimes \mathcal{M}\\ \Delta^{\text{IR}}_\mathcal{M}&=(\id\otimes \rho^{\text{IR}})\circ\Delta_\mathcal{M} &: && &\mathcal{M}\rightarrow \mathcal{M} \otimes \mathcal{H}^{\text{IR}} \end{align*}
In fact, the pair $(\M, \Delta_\mathcal{M}^{\text{UV}})$ is a \emph{left-comodule} over $\mathcal{H}^{\text{UV}}$ and  
$(\M, \Delta_\mathcal{M}^{\text{IR}})$ is a \emph{right-comodule} over $\mathcal{H}^{\text{IR}}$.
We can denote the respective kernels of the maps $\rho^{\text{UV}}$ and $\rho^{\text{IR}}$ as $I_{\text{UV}}:=\ker(\rho^{\text{UV}})$ and $I_{\text{IR}}:=\ker(\rho^{\text{IR}})$. These two subspaces are generated by all graphs that are not divergent UV-graphs or not scaleless componentwise superficially IR divergent graphs respectively. These two subspaces actually form \emph{Hopf ideals} of $\M$. We provide a proof sketch of this fact for readers who are interested in the algebraic details of this construction:
\begin{lemma}

    $I_{\text{UV}}$ and $I_{\text{IR}}$ are Hopf ideals of $\M$.
\end{lemma}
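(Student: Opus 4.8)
The plan is to establish the three defining properties of a Hopf ideal for each of $I_{\text{UV}}$ and $I_{\text{IR}}$: that it is a two-sided ideal, a coideal, and stable under the antipode. Two simplifications apply at once. First, because $\rho^{UV}$ and $\rho^{IR}$ act diagonally on the basis of graphs (fixing a graph or sending it to $0$), each kernel is the span of a set of basis graphs and is homogeneous for the grading of $\M$ by loop number and edge number; it therefore suffices to check everything on basis graphs. Second, since $\M$ is connected and graded, antipode-stability is automatic once $I$ is known to be a graded two-sided ideal and coideal with $\counit_\M(I)=0$: then $\M/I$ is a connected graded bialgebra, hence carries an antipode $\bar S$ induced from $S$ via $\pi\circ S=\bar S\circ\pi$ (with $\pi:\M\to\M/I$ the quotient map), which forces $S(I)\subseteq I$. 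So I only need the ideal and coideal properties together with $\counit_\M(I)=0$.

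The ideal property is immediate from the fact that multiplication is disjoint union. On basis graphs $m_\M(\Gamma\otimes\Delta)=\Gamma\sqcup\Delta$, whose connected components are those of $\Gamma$ together with those of $\Delta$. A graph lies in $I_{\text{UV}}$ exactly when one of its components fails to be 1PI or has $\omega<0$; such a defective component survives in $\Gamma\sqcup\Delta$, so $I_{\text{UV}}$ is an ideal. Likewise $\Gamma\sqcup\Delta$ is scaleless iff both factors are, so if $\Gamma$ is not scaleless or has a component with $\omega>0$ then the same defect is present in $\Gamma\sqcup\Delta$, giving $\M\cdot I_{\text{IR}}\subseteq I_{\text{IR}}$. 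Commutativity of $m_\M$ handles the other side. The counit vanishes on both kernels because $\one=\emptyset$ is both a divergent UV-graph and a (vacuously) scaleless componentwise IR-divergent graph, hence lies in neither kernel, so no basis graph in $I$ equals $\one$.

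For the coideal property, note that $\rho$ being an idempotent projection gives $\M=\H\oplus I$, whence $I\otimes\M+\M\otimes I=\ker(\rho\otimes\rho)$; thus $\Delta_\M(I)\subseteq I\otimes\M+\M\otimes I$ is equivalent to $(\rho\otimes\rho)\circ\Delta_\M$ vanishing on $I$. Since $\Delta_\M(\Gamma)=\sum_{\gamma\subseteq\Gamma}\gamma\otimes\Gamma/\gamma$ ranges over motic $\gamma$, a term survives $\rho\otimes\rho$ only when both $\gamma$ and $\Gamma/\gamma$ lie in the image $\H$. Hence it suffices to prove, in each case, the implication: if $\gamma$ and $\Gamma/\gamma$ both lie in $\H$, then so does $\Gamma$; the contrapositive then shows every term vanishes whenever $\Gamma\in I$. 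For the degree conditions I will use additivity of the superficial degree under contraction, $\omega(\Gamma)=\omega(\gamma)+\omega(\Gamma/\gamma)$, which follows from the defining formula for $\omega$ together with $L_{\Gamma/\gamma}=L_\Gamma-L_\gamma$, applied separately to each connected component of $\Gamma$.

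In the UV case $\H^{\text{UV}}$ consists of disjoint unions of 1PI components of nonnegative degree, so additivity gives $\omega\geq0$ componentwise for $\Gamma$, and the remaining topological input is the lemma that contracting 1PI subgraphs out of $\Gamma$ to obtain a 1PI graph forces $\Gamma$ to be 1PI. I would prove this by showing every edge of $\Gamma$ lies on a cycle: an edge internal to some component $\gamma_i$ does so because $\gamma_i$ is 1PI, while an edge surviving in $\Gamma/\gamma$ lies on a cycle there, which lifts to a closed walk in $\Gamma$ by routing through each contracted $\gamma_i$ along a connecting path (possible since $\gamma_i$ is connected). In the IR case $\H^{\text{IR}}$ imposes no connectivity condition, only scalelessness and $\omega\leq0$ componentwise; the degree part is again additivity, while scalelessness follows from the observation recorded in the text that $\Gamma/\gamma$ is scaleless iff $\gamma$ is mass-momentum-spanning, combined with the fact that a mass-momentum-spanning subgraph inherits all masses and external momenta of $\Gamma$ and is therefore scaleless iff $\Gamma$ is. The main obstacle is this coideal step, and within it the UV 1PI-lifting lemma and, for IR, the careful use of Brown's scale-inheritance rule; the ideal, counit, and antipode parts are routine.
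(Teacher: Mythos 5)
Your proof is correct and follows essentially the same route as the paper's proof sketch: the ideal property from disjoint union, reduction of the Hopf-ideal condition to the coideal condition via connectedness of $\M$, and the coideal condition from the fact that the defining defect of each kernel must be inherited by $\gamma$ or $\Gamma/\gamma$ in every coproduct term. You merely supply details the paper leaves implicit — the $\ker(\rho\otimes\rho)$ reformulation, the cycle-lifting argument for 1PIness, additivity of $\omega$ under contraction, and the degree condition in the IR case.
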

\begin{proof}[Proof sketch]
    Both subspaces are obviously ideals of $\M$ as a disjoint union of some graph with a non-UV divergent graph is still non-UV divergent and for the IR case equivalently. 
    Because $\M$ is a connected Hopf algebra it is sufficient to additionally show that $\Delta_\M I \subset I \otimes \M + \M \otimes I$ for $I$ to be an Hopf ideal.

    With the definition of the motic coproduct in mind, $\Delta_\M \Gamma = \sum_{\gamma \subset \Gamma} \gamma \otimes \Gamma/\gamma$, it is clear that we need to show that 
    the property of `non-UV divergence' or scalefulness needs to be inherited from $\Gamma$ to either $\gamma$ or $\Gamma/\gamma$ for $I_{\text{UV}}$ and $I_{\text{IR}}$ to be Hopf ideals. 

    For the UV case it is easy to verify that a graph $\Gamma$ with a non-1PI component will produce either a $\gamma$ or a $\Gamma/\gamma$ in the coproduct which also has a non-1PI component. The same holds true for a non-superficially divergent 1PI component in $\Gamma$, which must also reappear in either $\gamma$ or $\Gamma/\gamma$.

    In the IR case we need to verify that if $\Gamma$ is scaleful, then either $\gamma$ or $\Gamma/\gamma$ is scaleful. Which is the case by the definition of the motic subgraphs.
\end{proof}
\begin{corollary}
    The quotient Hopf algebras $\M / I_{\text{UV}}$ and $\M / I_{\text{IR}}$ can be identified with $\mathcal{H}^{\text{UV}}$ and $\mathcal{H}^{\text{IR}}$, which are therefore indeed Hopf algebras with the maps, $\Delta_\mathcal{M}^{\text{UV}}$ and $\Delta_\mathcal{M}^{\text{IR}}$ being actual coactions.
\end{corollary}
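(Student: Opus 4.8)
The plan is to derive the statement from the preceding Lemma together with two standard facts of Hopf-algebra theory: that the quotient of a Hopf algebra by a Hopf ideal is again a Hopf algebra, and that such a quotient automatically turns the ambient Hopf algebra into a comodule over it.

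First I would record the identifications at the level of vector spaces. By construction $\rho^{\mathrm{UV}}$ and $\rho^{\mathrm{IR}}$ are surjective linear maps with kernels $I_{\mathrm{UV}} = \ker(\rho^{\mathrm{UV}})$ and $I_{\mathrm{IR}} = \ker(\rho^{\mathrm{IR}})$, so the first isomorphism theorem yields linear isomorphisms $\bar{\rho}^{\mathrm{UV}}\colon \M/I_{\mathrm{UV}} \xrightarrow{\sim} \H^{\mathrm{UV}}$ and $\bar{\rho}^{\mathrm{IR}}\colon \M/I_{\mathrm{IR}} \xrightarrow{\sim} \H^{\mathrm{IR}}$. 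Since the Lemma shows that $I_{\mathrm{UV}}$ and $I_{\mathrm{IR}}$ are Hopf ideals, the quotients $\M/I_{\mathrm{UV}}$ and $\M/I_{\mathrm{IR}}$ inherit canonical Hopf-algebra structures for which the projections $\pi_{\mathrm{UV}}$ and $\pi_{\mathrm{IR}}$ are Hopf-algebra homomorphisms. Transporting these structures along $\bar{\rho}^{\mathrm{UV}}$ and $\bar{\rho}^{\mathrm{IR}}$ turns $\H^{\mathrm{UV}}$ and $\H^{\mathrm{IR}}$ into Hopf algebras, which is the first assertion of the corollary.

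Next I would verify that the transported structures coincide with the expected graph-combinatorial ones, namely that the product is the disjoint union of divergent UV- (respectively scaleless componentwise IR-divergent) graphs and that the induced coproduct is the Connes--Kreimer coproduct restricted to these graphs. The only thing to check is that $\rho^{\mathrm{UV}}$ and $\rho^{\mathrm{IR}}$ are multiplicative. This is immediate from the definitions: a disjoint union of graphs is a divergent UV-subgraph exactly when each factor is, and is scaleless and componentwise IR-divergent exactly when each factor is, so each projection annihilates a product precisely when it annihilates one of its factors. The unit $\one = \emptyset$ lies in both $\H^{\mathrm{UV}}$ and $\H^{\mathrm{IR}}$ (the empty graph is a divergent UV-subgraph and is vacuously scaleless and componentwise IR-divergent), so the counit descends to the quotients as well.

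Finally I would establish that $\Delta^{\mathrm{UV}}_\M$ and $\Delta^{\mathrm{IR}}_\M$ are genuine coactions. The general mechanism is that for any surjective Hopf-algebra homomorphism $\pi\colon \M \to \M/I$ the map $(\pi \otimes \id)\circ\Delta_\M$ makes $\M$ a left comodule, and $(\id \otimes \pi)\circ\Delta_\M$ makes it a right comodule, over the quotient. Under the identifications above these are precisely $\Delta^{\mathrm{UV}}_\M = (\rho^{\mathrm{UV}} \otimes \id)\circ\Delta_\M$, a left $\H^{\mathrm{UV}}$-comodule, and $\Delta^{\mathrm{IR}}_\M = (\id \otimes \rho^{\mathrm{IR}})\circ\Delta_\M$, a right $\H^{\mathrm{IR}}$-comodule. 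The counit axiom then follows from $\counit_{\H^{\mathrm{UV}}} \circ \rho^{\mathrm{UV}} = \counit_\M$ together with the counit property of $\Delta_\M$, and coassociativity follows from the coalgebra-homomorphism property $\Delta_{\M/I}\circ\pi = (\pi \otimes \pi)\circ\Delta_\M$ combined with the coassociativity of $\Delta_\M$ in a short diagram chase. I expect the only step needing genuine care, rather than formal manipulation, to be the compatibility check of the previous paragraph that the quotient Hopf structure agrees with the intended combinatorial one on $\H^{\mathrm{UV}}$ and $\H^{\mathrm{IR}}$; the coaction axioms themselves are purely formal consequences of the Lemma and the standard quotient and comodule constructions.
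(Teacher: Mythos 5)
Your proposal is correct and follows essentially the same route as the paper, which states the corollary without proof as an immediate consequence of the Lemma via the standard facts that a quotient by a Hopf ideal is a Hopf algebra and that the projection induces comodule structures through $(\rho^{\text{UV}}\otimes\id)\circ\Delta_\M$ and $(\id\otimes\rho^{\text{IR}})\circ\Delta_\M$. Your explicit verification that $\rho^{\text{UV}}$ and $\rho^{\text{IR}}$ are multiplicative, so that the quotient structure agrees with the intended combinatorial one on $\H^{\text{UV}}$ and $\H^{\text{IR}}$, is a worthwhile detail the paper leaves implicit.
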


\subsection{\texorpdfstring{$R$ and $R^*$ in the Hopf-algebraic framework}{R and R* in the Hopf-algebraic framework}}

For the Hopf-algebraic formulation, we are going to strictly differentiate between Feynman graphs and their evaluated value as an integral. 

The \textit{Feynman rules} will be denoted as $\phi: \M \rightarrow \A$, which maps Feynman graphs to an element of some algebra of functions which typically involves a formal parameter $\epsilon$ to incorporate dimensional regularisation. For simplicity, we will mostly assume that $\A$ is equal to the ring of Laurent series in $\epsilon$. This will also be the relevant case in the examples in Section~\ref{sec:IRrearrangement}.

For a gentle introduction into the Hopf-algebraic framework, we will leave the $Z$ and $\t Z$ as originally defined for this section and focus on the Hopf algebra versions of $R$ and $\t R$ for now. In Section~\ref{sec:defIRUVhopf} we will also introduce the Hopf-algebraic versions of $Z$ and $\t Z$.

Especially for readers who are unfamiliar with Hopf-algebraic formulations of renormalisation, we will go through detailed examples of explicit calculations. 

With the $Z$- and $\t Z$-operators defined as before, the $R$- and $\t R$-operator can be written in the Hopf algebra language as,
\begin{align} R(\Gamma)&= m_\A \circ (Z \otimes \phi) \circ \Delta^{\text{UV}}_\mathcal{M} (\Gamma)= \sum_{\gamma_{\text{UV}} \subseteq \Gamma} Z(\gamma_{\text{UV}})\phi(\Gamma/\gamma_{\text{UV}})\,, \\ \widetilde{R}(\Gamma)&= m_\A \circ (\phi \otimes \widetilde{Z}) \circ \Delta^{\text{IR}}_\mathcal{M} (\Gamma)=\sum_{\gamma_{\text{mm}} \subseteq \Gamma}\phi(\gamma_{\text{mm}})\widetilde{Z}( \Gamma/\gamma_{\text{mm}} ). \end{align}
These formulations are equivalent to the ones given in eqs.\ \eqref{eq:defR} and \eqref{eq:Rtilde}. With the equivalence in the second line being non-trivial due to the change from IR-subgraphs to motic mass-momentum-spanning subgraphs.

Since the $R^*$-operator involves both UV and IR divergences, we have to use a combination of both coactions $\Delta^{\text{UV}}_\mathcal{M}$ and $\Delta^{\text{IR}}_\mathcal{M}$ in our Hopf-algebraic framework. An important conceptual prerequisite for that is the following coassociativity-like identity which the coactions $\Delta^{\text{UV}}_\mathcal{M}$ and $\Delta^{\text{IR}}_\mathcal{M}$ fulfill: 
\begin{theorem}
    \label{thm:UVIRcommute}
    \begin{align*} (\id \otimes \Delta^{IR}_\mathcal{M}) \circ \Delta^{UV}_\mathcal{M}=(\Delta^{UV}_\mathcal{M}\otimes \id) \circ \Delta^{IR}_\mathcal{M} = (\rho^{\text{UV}} \otimes \id \otimes \rho^{\text{IR}}) \circ \Delta_\mathcal{M}^{(2)}, \end{align*}
where $\Delta_\mathcal{M}^{(2)} = (\Delta_\M \otimes \id) \circ \Delta_\M = (\id \otimes \Delta_\M) \circ \Delta_\M$ is the double coproduct.
\end{theorem}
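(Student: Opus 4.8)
The plan is to observe that both triple-tensor expressions on the left are assembled from the \emph{single} motic coproduct $\Delta_\mathcal{M}$ together with the two projectors $\rho^{UV}$ and $\rho^{IR}$, and to reduce the claimed identity to the coassociativity of $\Delta_\mathcal{M}$ (Brown's theorem, \cite[Thm.~4.2]{Brown:2015fyf}) by pushing the projectors onto the outer tensor slots. Recall from the previous subsection that $\Delta^{UV}_\mathcal{M} = (\rho^{UV}\otimes\id)\circ\Delta_\mathcal{M}$ and $\Delta^{IR}_\mathcal{M} = (\id\otimes\rho^{IR})\circ\Delta_\mathcal{M}$. The only algebraic tool needed beyond this is the interchange law (bifunctoriality of $\otimes$), namely $(f\otimes g)\circ(f'\otimes g') = (f\circ f')\otimes(g\circ g')$, which lets linear maps acting on disjoint tensor factors be freely commuted.

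First I would treat the left-hand composition. Writing $\id\otimes\Delta^{IR}_\mathcal{M} = (\id\otimes\id\otimes\rho^{IR})\circ(\id\otimes\Delta_\mathcal{M})$ and applying the interchange law to move $\rho^{UV}$ outward through $(\id\otimes\Delta_\mathcal{M})$, one obtains
\[ (\id\otimes\Delta^{IR}_\mathcal{M})\circ\Delta^{UV}_\mathcal{M} = (\rho^{UV}\otimes\id\otimes\rho^{IR})\circ(\id\otimes\Delta_\mathcal{M})\circ\Delta_\mathcal{M}. \]
An entirely parallel manipulation of the second composition, now peeling $\rho^{IR}$ off the last factor while keeping $\rho^{UV}$ on the first, gives
\[ (\Delta^{UV}_\mathcal{M}\otimes\id)\circ\Delta^{IR}_\mathcal{M} = (\rho^{UV}\otimes\id\otimes\rho^{IR})\circ(\Delta_\mathcal{M}\otimes\id)\circ\Delta_\mathcal{M}. \]
Since $(\id\otimes\Delta_\mathcal{M})\circ\Delta_\mathcal{M} = (\Delta_\mathcal{M}\otimes\id)\circ\Delta_\mathcal{M} = \Delta_\mathcal{M}^{(2)}$ by coassociativity, both right-hand sides collapse to $(\rho^{UV}\otimes\id\otimes\rho^{IR})\circ\Delta_\mathcal{M}^{(2)}$, which is the asserted common value. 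The whole statement therefore follows once the two displayed identities are justified, and these are pure formal consequences of the interchange law.

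The step I expect to be the genuine point rather than bookkeeping is the justification that the projectors may be factored off in this way: that $\rho^{UV}$ depends only on the leftmost tensor factor and $\rho^{IR}$ only on the rightmost, independently of the intervening factor and of which coaction is applied first. Concretely, I would check that whether a subgraph $\gamma$ is a divergent UV-subgraph is an intrinsic property of $\gamma$, and that whether the outermost contraction is scaleless and componentwise IR-divergent is an intrinsic property of that last factor, so that neither survival condition is affected by the order of subtraction. This rests on Brown's convention for how masses and external momenta are inherited under nested contractions, together with the identification of the object $\Gamma/\gamma_{UV}$ fed into the second coaction with the corresponding factor produced by $\Delta_\mathcal{M}^{(2)}$. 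That this nested structure is consistent --- so that the mass-momentum-spanning condition read off at the second stage agrees with the one encoded in $\Delta_\mathcal{M}^{(2)}$ --- is precisely the content of the coassociativity of $\Delta_\mathcal{M}$, and it is what makes the reduction to that result clean. Physically, this identity is the statement that the UV and IR subtractions commute, so its proof hinges entirely on Brown's coassociative inheritance of masses and momenta.
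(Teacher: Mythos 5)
Your proposal is correct and follows essentially the same route as the paper: rewrite $\Delta^{\UV}_\M$ and $\Delta^{\IR}_\M$ as $(\rho^{\UV}\otimes\id)\circ\Delta_\M$ and $(\id\otimes\rho^{\IR})\circ\Delta_\M$, push the projectors onto the outer tensor slots via the interchange law, and reduce both sides to $(\rho^{\UV}\otimes\id\otimes\rho^{\IR})\circ\Delta_\M^{(2)}$ by coassociativity of the motic coproduct. Your closing observation --- that the real content is the well-definedness of the projectors on single tensor factors under Brown's inheritance conventions --- is left implicit in the paper's proof but is consistent with it.
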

This statement gives a precise formulation of the fact that for log-divergent subgraphs it does not matter if either UV  or IR divergences are subtracted first. This `coassociativity' of $\Delta^{UV}_\mathcal{M}$ and $\Delta^{IR}_\mathcal{M}$ is essential for the consistency of the formulation or $R^*$ and is highly non-obvious in the traditional formulation.  In the Hopf-algebraic framework, the proof is straightforward.

\begin{proof}
    The proof works by using the projectors $\rho^{\text{UV}}$ and $\rho^{\text{IR}}$ together with the properties of the Hopf algebra of motic graphs:
\begin{gather*} (\id \otimes \Delta^{\text{IR}}_\mathcal{M})\circ \Delta^{\text{UV}}_\mathcal{M}= (\id \otimes [\id \otimes \rho^{\text{IR}}]\circ \Delta_\mathcal{M})\circ (\rho^{\text{UV}} \otimes \id) \circ \Delta_\mathcal{M}\\= (\rho^{\text{UV}} \otimes \id \otimes \rho^{\text{IR}}) \circ (\id\otimes \Delta_\mathcal{M})\circ \Delta_\mathcal{M}= (\rho^{\text{UV}} \otimes \id \otimes \rho^{\text{IR}}) \circ (\Delta_\mathcal{M} \otimes \id)\circ \Delta_\mathcal{M}\\= ([\rho^{\text{UV}} \otimes \id]\circ \Delta_\mathcal{M} \otimes \id)\circ(\id \otimes \rho^{\text{IR}}) \circ \Delta_\mathcal{M}= (\Delta^{\text{UV}}_\mathcal{M} \otimes \id)\circ\Delta^{\text{IR}}, \end{gather*}
where we used the coassociativity of the motic coproduct in the third step.
\end{proof}

It thus follows that the $R^*$-operator in the Hopf algebra formalism can be written in two ways. Either by treating the UV divergences first,
\begin{align} \begin{aligned} \label{eqn:RstarUV} R^*(\Gamma) &=m_\A^{(2)} \circ ( Z \otimes \phi \otimes \widetilde{Z}) \circ (\id \otimes \Delta^{IR}_\mathcal{M}) \circ\Delta^{UV}_\mathcal{M}(\Gamma) \\ &= \sum_{\gamma_{\text{UV}} \subseteq \Gamma}\sum_{\gamma_{mm} \subseteq \Gamma/\gamma_{\text{UV}}} Z(\gamma_{\text{UV}}) \hspace{1mm}\phi(\gamma_{\text{mm}}) \hspace{1mm} \widetilde{Z}(\Gamma /\gamma_{\text{UV}} / \gamma_{\text{mm}}), \end{aligned} \end{align}
or by handling the IR divergences first,
\begin{align} \begin{aligned} \label{eqn:RstarIR} R^*(\Gamma) &=m_\A^{(2)} \circ ( Z \otimes \phi \otimes \widetilde{Z}) \circ (\Delta^{UV}_\mathcal{M} \otimes \id) \circ \Delta^{IR}_\mathcal{M}(\Gamma) \\ &= \sum_{\gamma_{\text{mm}} \subseteq \Gamma}\sum_{\gamma_{UV} \subseteq \gamma_{\text{mm}}} Z(\gamma_{\text{UV}}) \hspace{1mm}\phi(\gamma_{\text{mm}}/\gamma_{\text{UV}}) \hspace{1mm} \widetilde{Z}(\Gamma / \gamma_{\text{mm}}). \end{aligned} \end{align}
The first version can be identified with the traditional definition in eq.~\eqref{eq:Rstar} after using the fact that the complimentary graph of the mass-momentum spanning subgraph can be identified with an IR-subgraph as proven in Theorem~\ref{thm:IRandMotic}. Note also that the condition that the UV- and IR-subgraph may not overlap, resolves itself nicely into the fact that the respective mass-momentum-spanning complementary graph shall be a subgraph of the contracted UV graph.
The second version provides an equivalent formulation of $R^*$. Note that the respective coaction ensures that $Z$ acts on 1PI superficially UV divergent subgraphs and $\widetilde{Z}$ acts on scaleless IR divergent graphs.

\subsection{\texorpdfstring{Examples of basic Hopf-algebraic $R^*$}{Examples of basic Hopf-algebraic R*}}
\label{sec:examples_hopf_basic}
To get acquainted with the Hopf-algebraic formulation, we will discuss a couple of concrete examples. 

Consider again the, in four spacetime dimensions, log-divergent graph 
\SetScale{0.15}
$\fullext.$
This graph contains both UV and IR (sub)divergences.  

The first step is to isolate the UV divergences, which we can do by acting with $\Delta^{UV}_\mathcal{M}$:
\begin{align} \label{eq:UVfirstExample} \Delta^{\text{UV}}_\mathcal{M}(\fullext)= (\rho^{\text{\text{UV}}} \otimes \id) \circ \Delta_\mathcal{M}(\fullext)=\extlegs23 \otimes \subgraphR14irdiv + \fullext \otimes \one + \one \otimes \fullext\,, \end{align}
where all subgraphs $\gamma$, appearing in the left entry of the tensor product, must be componentwise 1PI and satisfy $\omega(\delta)\geq 0$ for each 1PI component $\delta$. The empty graph also appears, as it is 1PI and superficially divergent by definition. 

To also include the IR divergences, we need to act with $(\id\otimes \Delta_\mathcal{M}^{IR}) \circ \Delta^{UV}_\mathcal{M}$ on the graph. This means that we have to apply $\Delta_\M^\text{IR}$ to the three different graphs that appear on the right hand side of the tensor product in eq.\ \eqref{eq:UVfirstExample}. The coaction $\Delta^{IR}_\mathcal{M}$ on these graphs yields
\begin{eqnarray}
\begin{aligned}
\label{eq:IRstructureExample}
\Delta^{\text{IR}}_\mathcal{M}(\one)&= \one \otimes \one \\
\Delta^{\text{IR}}_\mathcal{M}(\subgraphR14irdiv) &= \subgraphR14irdiv \otimes \one + \one \otimes \subgraphR14irdiv\\
\Delta^{\text{IR}}_\mathcal{M}(\fullext) &= \extlegs23 \otimes \subgraphR14irdiv + \fullext \otimes \one.
\end{aligned}
\end{eqnarray}\\
The empty graph does not appear as a subgraph in the last line because it is not mass-momentum-spanning. The subgraph $\fullext$ appears as it is trivially mass-momentum-spanning. The first two lines do contain the empty graph as a subgraph since every subgraph of a scaleless graph is mass-momentum-spanning. Recall the non-trivial inheritance of masses and momenta from parent to subgraphs in the context of motic (sub)graphs which was discussed in Section~\ref{setup:Motic}.

Combining the the coaction calculations in eqs.~\eqref{eq:UVfirstExample} and \eqref{eq:IRstructureExample}, we get
\begin{gather} \begin{gathered} \label{eq:completeUVIRexample} (\id \otimes \Delta^{\text{IR}}_\mathcal{M})\circ \Delta^{\text{UV}}_\mathcal{M}[\fullext] = (\id \otimes \Delta^{\text{IR}}_\mathcal{M})[\extlegs23 \otimes \subgraphR14irdiv + \fullext \otimes \one +\one \otimes\fullext]\\ =\extlegs23 \otimes \Delta^{\text{IR}}_\mathcal{M}(\subgraphR14irdiv) + \fullext \otimes \Delta^{\text{IR}}_\mathcal{M} (\one) +\one \otimes\Delta^{\text{IR}}_\mathcal{M}(\fullext)\\ =\extlegs23 \otimes (\subgraphR14irdiv \otimes \one + \one \otimes \subgraphR14irdiv) + \fullext \otimes (\one\otimes \one) + \one \otimes (\extlegs23 \otimes \subgraphR14irdiv + \fullext \otimes \one) \\ =\extlegs23 \otimes \subgraphR14irdiv \otimes \one+ \extlegs23 \otimes \one \otimes \subgraphR14irdiv+ \fullext \otimes \one \otimes \one +\one \otimes \extlegs23 \otimes \subgraphR14irdiv + \one \otimes \fullext \otimes \one. \end{gathered} \end{gather}
After acting with the operator $Z \otimes \phi \otimes \t Z$ on the triple tensor product, we obtain
\begin{gather*} (Z \otimes \phi \otimes \t Z) \circ (\id \otimes \Delta^{\text{\text{IR}}}_\mathcal{M})\circ \Delta^{\text{\text{UV}}}_\mathcal{M}[\fullext] = \\ Z(\extlegs23) \phi(\subgraphR14irdiv)\widetilde{Z}(\one) + Z(\extlegs23) \phi(\one) \widetilde{Z}(\subgraphR14irdiv) +\\ + Z(\fullext)\phi(\one)\widetilde{Z}(\one)+Z(\one) \phi(\extlegs23)\widetilde{Z}(\subgraphR14irdiv) + Z(\one)\phi(\fullext) \widetilde{Z}(\one), \end{gather*}
using $\phi(\one) = Z(\one)= \widetilde{Z}(\one) = 1$, we can see that this agrees with the calculation in eq.~(5.11)~in~\cite{Herzog:2017bjx}:
\vspace{-0.2cm}
\SetScale{0.2}
\begin{gather*} \label{Franzequation} R^*(\fullext)= \fullext + Z(\fullext) + Z(\extlegs23)* \externallegs14\\ + \widetilde{Z}(\subgraphR14irdiv)*\extlegs23 + \widetilde{Z}(\subgraphR14irdiv)*Z(\extlegs23). \end{gather*}
In contrast to the traditional formulation of $R^*$, the terms in the products appear in a different order in the Hopf algebra framework. The `bare' evaluation of the integral appears in between the UV counterterm on the left and the IR counterterm on the right. As the $*$-product boils down to ordinary multiplication for logarithmically divergent graphs, this is only an issue of non-logarithmically divergent graphs are involved. In this case the counterterm operator $\t Z$ involves differential operators, which act on the bare integral. These differential operators need to be left-acting in the Hopf algebra formulation.

To give another example of the workings of the Hopf algebra framework and to illustrate the coassociativity/commutativity of the IR and UV subtractions, we will give the evaluation of the same example, but we are going to start the calculation with the IR-coaction:
\vspace{-0.2cm}
\begin{align*} \Delta^{\text{IR}}_\mathcal{M}(\fullext)= \extlegs23 \otimes \subgraphR14irdiv + \fullext \otimes \one. \end{align*}
Performing the UV-coaction subsequently gives,
\begin{gather*} (\Delta^{\text{UV}}_\mathcal{M}\otimes \id) \circ \Delta^{\text{IR}}_\mathcal{M}(\fullext) \\ =(\Delta^{\text{UV}}_\mathcal{M}\otimes \id) \Big(\extlegs23 \otimes \subgraphR14irdiv + \fullext \otimes \one \Big)\\ =\extlegs23 \otimes \one \otimes \subgraphR14irdiv + \one \otimes \extlegs23 \otimes \subgraphR14irdiv + \extlegs23 \otimes \subgraphR14irdiv \otimes \one +\\ + \fullext \otimes \one \otimes \one + \one \otimes \fullext \otimes \one. \end{gather*}
The result agrees with the one in eq.\ \eqref{eq:completeUVIRexample} which confirms the statement of Theorem~\ref{thm:UVIRcommute} explicitly.

\subsection{Definitions of the IR and UV counterterms}
\label{sec:defIRUVhopf}

The $Z$- and $\t Z$-operators are \emph{twisted antipodes} \cite{Kreimer:1997dp} in the Hopf algebra language. In the literature, the twisted antipode for the UV counterterm is often denoted as $S^\phi_K$. This notation makes the implicit dependence of the counterterm operator on the Feynman rules, encoded in $\phi$, as well as on the renormalisation scheme, encoded in $K$, explicit. Furthermore, it highlights the similarity of the counterterm map with the antipode $S$ of the Hopf algebra, which will become apparent in this section. Here, we will stick to the notation of $Z$ and $\t Z$ for the twisted antipodes as the operators behave completely equivalently in the Hopf-algebraic formulation. The twisted antipodes $Z$ and $\t Z$ can be defined recursively in the Hopf algebra framework as,
\begin{align} \label{Zoperation} \ZZ(\Gamma)&= -K \circ m_\A^{(2)} \circ(\ZZ \otimes \phi \otimes \t \ZZ) \circ \left( \Delta_\mathcal{M}^{(2)} - \id \otimes \one \otimes \one \right) \circ \rho^{\text{UV}}(\Gamma)\,, \\ \label{Ztildeoperation} \t \ZZ(\Gamma)&= -K \circ m_\A^{(2)} \circ(\ZZ \otimes \phi \otimes \t \ZZ) \circ \left( \Delta_\mathcal{M}^{(2)} - \one \otimes \one \otimes \id \right) \circ \rho^{\text{IR}} (\Gamma)\,, \end{align}
where $m^{(2)}_\M= m_\M \circ (m_\M \otimes \id)=m_\M \circ (\id \otimes m_\M)$ is the multiplication operator acting on a triple tensor product, $\Delta^{(2)}_\M= (\Delta_\M \otimes \id)\circ \Delta_\M= (\id \otimes \Delta_\M)\circ \Delta_\M$ is the respective coproduct and the recursion terminates with $\ZZ(\one) = \t \ZZ(\one) = 1$.

As an example for a UV divergent graph $\Gamma$ we give
\begin{gather*} \ZZ(\Gamma)= -K \circ m_\A^{(2)} \circ(\ZZ \otimes \phi \otimes \t \ZZ) \circ \left( \Delta_\mathcal{M}^{(2)} - \id \otimes \one \otimes \one \right) \circ \rho^{\text{UV}} (\Gamma) \\ = -K \circ m_\A^{(2)} \circ( (\ZZ \circ \rho^{\text{UV}} )\otimes \phi \otimes (\t \ZZ \circ \rho^{\text{IR}})) \circ \left( \Delta_\mathcal{M}^{(2)} (\Gamma) - \Gamma \otimes \one \otimes \one \right) \\ = -K \circ m_\A^{(2)} \circ( \ZZ\otimes \phi \otimes \t \ZZ) \circ (\rho^\text{UV} \otimes \id \otimes \rho^\text{IR}) \circ \left( \Delta_\mathcal{M}^{(2)}( \Gamma) - \Gamma \otimes \one \otimes \one \right)\\ = -K \circ m_\A^{(2)} \circ( \ZZ\otimes \phi \otimes \t \ZZ) \circ \left((\rho^\text{UV} \otimes \id \otimes \rho^\text{IR}) \circ \Delta_\mathcal{M}^{(2)} (\Gamma) - \Gamma \otimes \one \otimes \one \right)\\ = -K \circ m_\A^{(2)} \circ( \ZZ\otimes \phi \otimes \t \ZZ) \circ \left(( \id \otimes \Delta_\mathcal{M}^{\text{IR}}) \circ \Delta_\mathcal{M}^{\text{UV}} (\Gamma) - \Gamma \otimes \one \otimes \one \right)\\ = -K \left( \sum_{\substack{\gamma_\text{UV} \subset \Gamma\\ \gamma_\text{UV} \neq \Gamma}} \sum_{\gamma_{mm} \subset \Gamma/\gamma_\text{UV}} \ZZ(\gamma_\text{UV}) \phi(\gamma_{mm}) \t \ZZ(\Gamma/\gamma_\text{UV}/\gamma_\text{mm}) \right), \end{gather*}
which is the same recursive equation as eq.~\eqref{eq:defZRstar} after we appropriately identify $\t \gamma$ and $\Gamma / \gamma_\text{UV} / \gamma_\text{mm}$ in accordance with the definition of $\t \gamma$. An analogous straightforward calculation results in the expression for $\t Z$ in eq.\ \eqref{eq:deftZRstar}. Note that the Hopf-algebraic version does not involve any choice on whether to consider the UV  or the IR divergences first and therefore provides a symmetric formulation of the $R^*$-operator.

This way we can give the most `unbiased' and compact formulation of the $R^*$-operator:
\begin{align} \label{eqn:short_R_star} R^* = \ZZ \star \phi \star \t \ZZ, \end{align}
where the $\star$-product\footnote{Unfortunately, there are two products that are completely different in nature, but appear in similar contexts in the mathematics and the physics literature on renormalisation: the $*$-operator from the original formulation of BPHZ is an `insertion' operator, which inserts information from a single Feynman diagram into another. The $\star$-product on the other hand is settled entirely in the Hopf algebra framework, where it is a normal group-product which multiplies linear maps from the Hopf algebra to some other space. Both notations are standard and are commonly used. So, we decided to keep the notation in line with the existing literature. The caveat is the problematic similarity between the two symbols $*$ and $\star$, which hopefully does not cause too much confusion for the reader.} is defined as
\begin{align*} \phi \star \psi := m_\A \circ ( \phi \otimes \psi ) \circ \Delta_\M. \end{align*}
It is easy to verify that eq.\ \eqref{eqn:short_R_star} is equivalent to eq.\ \eqref{eqn:RstarUV} and \eqref{eqn:RstarIR} by observing that
\begin{gather*} \ZZ \star \phi \star \t \ZZ = m_\A^{(2)} \circ ( \ZZ \otimes \phi \otimes \t \ZZ ) \circ \Delta_\mathcal{M}^{(2)} = m_\A^{(2)} \circ ( \ZZ \circ \rho^{\text{UV}}\otimes \phi \otimes \t \ZZ \circ \rho^{\text{IR}} ) \circ \Delta_\mathcal{M}^{(2)} \\ = m_\A^{(2)} \circ ( \ZZ\otimes \phi \otimes \t \ZZ) \circ ( \rho^{\text{UV}} \otimes \id \otimes \rho^{\text{IR}} ) \circ \Delta_\mathcal{M}^{(2)}, \end{gather*}
and using Theorem~\ref{thm:UVIRcommute}.

Due to coassociativity of $\Delta_\M$, this $\star$-product is \textit{associative}.  
Therefore the `commutativity' of UV  and IR subtraction is reduced to the following natural and obvious statement:
\begin{align} (\ZZ \star \phi) \star \t \ZZ = \ZZ \star (\phi \star \t \ZZ). \end{align}

Another immediate application of the Hopf algebra formulation of $R^*$ is a simple proof of the `finiteness' of $R^*$. Finiteness in this context means that for a graph $\Gamma$, $R^*(\Gamma)$ stays finite when we remove the regulator $\epsilon \rightarrow 0$.
\begin{theorem}
\label{thm:Rstarfinite}
If $\Gamma$ is a graph which is UV divergent or scaleless superficially IR divergent or both then,
\begin{align*} R^*( \Gamma ) \in \ker K. \end{align*}
\end{theorem}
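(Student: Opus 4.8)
The plan is to read finiteness directly off the recursive definitions \eqref{Zoperation}--\eqref{Ztildeoperation} of the twisted antipodes, combined with the compact identity \eqref{eqn:short_R_star}, $R^* = \ZZ \star \phi \star \t\ZZ$, and the idempotency of the projector $K$. The essential observation is that, for a graph $\Gamma$ lying in the relevant class, the recursion for $\ZZ(\Gamma)$ (or $\t\ZZ(\Gamma)$) can be rewritten so that the argument of $K$ is precisely $R^*(\Gamma)$ with a single ``top'' term removed; since that top term is itself $\ZZ(\Gamma)$ (resp.\ $\t\ZZ(\Gamma)$), a one-line manipulation forces $K(R^*(\Gamma))=0$.

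First I would treat the case where $\Gamma$ is a divergent UV graph, i.e.\ $\rho^{\text{UV}}(\Gamma)=\Gamma$. Using $\ZZ = \ZZ\circ\rho^{\text{UV}}$ and $\t\ZZ=\t\ZZ\circ\rho^{\text{IR}}$ --- which hold because $\ZZ$ and $\t\ZZ$ vanish outside $\H^{\text{UV}}$ and $\H^{\text{IR}}$ --- the projectors $\rho^{\text{UV}}\otimes\id\otimes\rho^{\text{IR}}$ can be inserted for free into \eqref{Zoperation}. Theorem~\ref{thm:UVIRcommute} together with the identification $R^* = m_\A^{(2)}\circ(\ZZ\otimes\phi\otimes\t\ZZ)\circ(\rho^{\text{UV}}\otimes\id\otimes\rho^{\text{IR}})\circ\Delta_\M^{(2)}$ then turns the $\Delta_\M^{(2)}$ piece into $R^*(\Gamma)$, while the subtracted $\id\otimes\one\otimes\one$ contributes exactly $\ZZ(\Gamma)\phi(\one)\t\ZZ(\one)=\ZZ(\Gamma)$. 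Following the same steps as in the computation of $\ZZ(\Gamma)$ displayed after eq.~\eqref{Ztildeoperation}, this yields
\begin{align*} \ZZ(\Gamma) = -K\bigl( R^*(\Gamma) - \ZZ(\Gamma) \bigr). \end{align*}

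Now I would invoke idempotency of $K$. Since $\ZZ(\Gamma)$ is by construction a pure counterterm it satisfies $\ZZ(\Gamma)=K(\ZZ(\Gamma))$, so expanding the right-hand side gives $\ZZ(\Gamma) = -K(R^*(\Gamma)) + \ZZ(\Gamma)$, that is $K(R^*(\Gamma))=0$, which is precisely $R^*(\Gamma)\in\ker K$. The case of a scaleless superficially IR-divergent $\Gamma$, $\rho^{\text{IR}}(\Gamma)=\Gamma$, is completely symmetric: recursion \eqref{Ztildeoperation} delivers $\t\ZZ(\Gamma) = -K(R^*(\Gamma)-\t\ZZ(\Gamma))$, and the same idempotency step gives $K(R^*(\Gamma))=0$. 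If $\Gamma$ is both UV divergent and scaleless IR divergent, then both projectors act as the identity and either of the two equations may be used.

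The only genuinely delicate points --- which I would record as hypotheses on the scheme rather than prove --- are that $K$ is an idempotent linear projection onto the divergent (pole) subspace, so that every counterterm obeys $K\circ\ZZ=\ZZ$ and $K\circ\t\ZZ=\t\ZZ$, and that the top term subtracted in each recursion coincides with the corresponding top term actually present in $R^*(\Gamma)$ (guaranteed precisely by $\rho^{\text{UV}}(\Gamma)=\Gamma$, resp.\ $\rho^{\text{IR}}(\Gamma)=\Gamma$). Everything else is the associativity and coassociativity bookkeeping already secured by Theorem~\ref{thm:UVIRcommute}, so I anticipate no deeper obstacle: the substance of the argument has been front-loaded into the construction of the two coactions and their commuting property.
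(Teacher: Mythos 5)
Your proposal is correct and follows essentially the same route as the paper: the paper isolates the two ``top'' terms by writing $R^* = \ZZ + \t\ZZ + \bar R$ with $\ZZ = -K\circ(\t\ZZ+\bar R)\circ\rho^{\text{UV}}$ and $\t\ZZ = -K\circ(\ZZ+\bar R)\circ\rho^{\text{IR}}$, then concludes by idempotency and linearity of $K$, which is exactly your identity $\ZZ(\Gamma) = -K\bigl(R^*(\Gamma)-\ZZ(\Gamma)\bigr)$ in the case $\rho^{\text{UV}}(\Gamma)=\Gamma$ (and its IR mirror). The hypotheses you flag --- $K$ an idempotent linear projector and the coincidence of the subtracted top term with the one appearing in $R^*(\Gamma)$ --- are precisely what the paper relies on as well.
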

Note that the space of finite amplitudes is equivalent to $\ker K$, as $K$ projects to the divergent part of the amplitude. 
\begin{proof}
Let 
\begin{align*} \bar R := m_\A^{(2)} \circ(\ZZ \otimes \phi \otimes \t \ZZ) \circ \left( \Delta_\mathcal{M}^{(2)} - \id \otimes \one \otimes \one - \one \otimes \one \otimes \id \right). \end{align*}
We have $Z = -K \circ ( \t Z + \bar R ) \circ \rho^{\text{UV}}$, $\t Z = -K \circ ( Z + \bar R ) \circ \rho^{\text{IR}}$ and 
$R^* = Z + \t Z + \bar R$.
Suppose that a graph $\Gamma$ is in $\ker \rho^{\text{IR}}$ but $\Gamma \notin \ker \rho^{\text{UV}}$.
Then,
\begin{align*} K \circ R^*(\Gamma) = K \circ \bar R(\Gamma) + K \circ Z(\Gamma) + K \circ \t Z(\Gamma) = K \circ \bar R(\Gamma) - K \circ ( \t Z + \bar R )(\Gamma) = 0, \end{align*}
and for  $\Gamma \notin \ker \rho^{\text{IR}}$ but $\Gamma \in \ker \rho^{\text{UV}}$ analogously.
For $\Gamma \notin \ker \rho^{\text{IR}}$ and $\Gamma \notin \ker \rho^{\text{UV}}$, we get
\begin{gather*} K \circ R^*(\Gamma) = K \circ \bar R(\Gamma) + K \circ Z(\Gamma) + K \circ \t Z(\Gamma) \\ = K \circ \bar R(\Gamma) - K \circ ( \t Z + \bar R )(\Gamma) + K \circ \t Z(\Gamma) = 0. \qedhere \end{gather*}
\end{proof}
Note the requirement that the graph has to be either superficially UV  or IR divergent. One of both is necessary without additional assumptions. See \cite{Borinsky:2015mga} for a discussion of the special role of \emph{overlapping-divergences} and the lattice structure of Feynman diagrams in this context.

\section{IR rearrangement}
\label{sec:IRrearrangement}

\subsection{A pedestrian example in detail}
As discussed in Section~\ref{setup:IRR} an IR rearrangement may be exploited to simplify the calculation of a UV counterterm. %
To illustrate the idea behind IR rearrangement and at the same time show how the Hopf algebra machinery can be applied, we will start with a basic example. Consider the graphs,
\vspace{-0.35cm}
\begin{align} \label{eqn:easy_example_graphs} \SetScale{0.5} \bubble,\qquad \vacuumoneloop\,. \end{align}
While both graphs are UV divergent only the second is IR divergent in four-dimensional spacetime as it has no legs. We can also think about the second graph, as being obtained from the first one after making both legs enter the same vertex. By Assumption~\ref{asp:firstIR}, both graphs give the same UV counterterm:
\vspace{-0.35cm}
\begin{align*} \SetScale{0.5} \ZZ \Big( \bubble \Big) = \ZZ \Big( \vacuumoneloop \Big) \end{align*}
We start with the first graph. The map $\ZZ$ can be evaluated in the Hopf algebra framework. For a start, we will follow the definitions very closely and do so in a meticulous manner. First, we compute the motic coproduct:
\vspace{-0.35cm}
\begin{align*} \SetScale{0.5} \Delta_\M \Big( \bubble \Big) &= \SetScale{0.5} \bubble \otimes \one + \one \otimes \bubble + \bubbleHalfTop \otimes \bubbleOneTop + \bubbleHalfBottom \otimes \bubbleOneBottom \\ &= \SetScale{0.5} \bubble \otimes \one + \one \otimes \bubble + 2 \bubbleHalfBottom \otimes \bubbleOneBottom      \end{align*}
as well as the double coproduct:
\vspace{-0.35cm}
\begin{gather} \begin{gathered} \label{eqn:bubble_dbl_cop} \SetScale{0.5} \Delta_\M^{(2)} \Big( \bubble \Big) = (\Delta_\M \otimes \id ) \circ \Delta_\M \Big( \bubble \Big) \\ = \SetScale{0.5} (\Delta_\M \otimes \id ) \Big( \bubble \otimes \one + \one \otimes \bubble + 2 \bubbleHalfBottom \otimes \bubbleOneBottom \Big)\\ = \SetScale{0.5} \bubble \otimes \one \otimes \one + \one \otimes \bubble \otimes \one+ 2 \bubbleHalfBottom \otimes \bubbleOneBottom \otimes \one + \\ + \SetScale{0.5} \one \otimes \one \otimes \bubble + 2 \bubbleHalfBottom \otimes \one \otimes \bubbleOneBottom + 2 \one \otimes \bubbleHalfBottom \otimes \bubbleOneBottom. \end{gathered} \end{gather}
With these calculations at hand we are ready to apply the definition of the $\ZZ$-operator to evaluate the respective counterterms:
\vspace{-0.35cm}
\begin{gather} \begin{gathered} \label{eqn:Sbubble} \SetScale{0.5} \ZZ \Big( \bubble \Big) = - K \circ m_A^{(2)} \circ(\ZZ \otimes \phi \otimes \t \ZZ) \circ \left( \Delta_\mathcal{M}^{(2)} - \id \otimes \one \otimes \one \right) \circ \rho^{\text{UV}}\Big( \bubble \Big) \\ \SetScale{0.5} = - K \circ m_A^{(2)} \circ(\ZZ \otimes \phi \otimes \t \ZZ) \left( \Delta_\mathcal{M}^{(2)}\Big(\bubble\Big) - \bubble \otimes \one \otimes \one \right) \\ \SetScale{0.5} = - K \circ m_A^{(2)} \circ(\ZZ \otimes \phi \otimes \t \ZZ) \Big( \one \otimes \bubble \otimes \one+ 2 \bubbleHalfBottom \otimes \bubbleOneBottom \otimes \one + \\ + \SetScale{0.5} \one \otimes \one \otimes \bubble + 2 \bubbleHalfBottom \otimes \one \otimes \bubbleOneBottom + 2 \one \otimes \bubbleHalfBottom \otimes \bubbleOneBottom \Big) \end{gathered} \end{gather}
As $\ZZ = \ZZ \circ \rho^{\text{UV}}$ and $\t \ZZ = \t \ZZ \circ \rho^{\text{IR}}$, all terms vanish which either do not involve a UV divergent graph in the first entry of the tensor product or do not have a scaleless IR divergent graph in the third entry of the tensor product.
In four dimensions we have, 
\vspace{-0.35cm}
\begin{gather*} \SetScale{0.5} \omega \Big( \bubble \Big) = \omega \Big( \vacuumoneloop \Big) = 0,\qquad \SetScale{0.5} \omega \Big(\bubbleOneBottom \Big) = 2. \end{gather*}
Therefore, both $\SetScale{0.3} \bubble$ and $\SetScale{0.3} \vacuumoneloop$ are logarithmically divergent. The motic subgraph $\SetScale{0.3} \bubbleHalfBottom$ is not a 1PI graph. The 1PI graph $\SetScale{0.3} \bubble$ is not scaleless. From this we can deduce that 
\vspace{-0.35cm}
\begin{gather*} \SetScale{0.5} \rho^{\text{UV}}\Big(\bubbleHalfBottom\Big) = 0\,,\qquad \SetScale{0.5} \rho^{\text{IR}}\Big(\bubble\Big) = 0 = \rho^{\text{IR}}\Big(\bubbleOneBottom\Big). \end{gather*}

The respective projectors act as the identity map on all other graphs involved.

 Therefore, the expression in eq.\ \eqref{eqn:Sbubble} simplifies significantly. In fact, only the first triple tensor product term in the last equality of eq.\ \eqref{eqn:Sbubble} contributes:
 \vspace{-0.35cm}
\begin{gather*} \SetScale{0.5} \ZZ \Big( \bubble \Big) = - K \circ m_A^{(2)} \circ(\ZZ \otimes \phi \otimes \t \ZZ) \Big( \one \otimes \bubble \otimes \one \Big) \SetScale{0.5} = - K \left[ \phi\Big(\bubble\Big) \right] \end{gather*}
This finishes the combinatorial part of the calculation as we completely reduced the counterterm $\SetScale{0.3} \ZZ(\bubble)$ to evaluations of the Feynman rules $\phi$. In this simple example, the evaluation of $\SetScale{0.3} \phi(\bubble)$ is very easy and an IR rearrangement results in no computational advantage. However, it is still instructive to evaluate the counterterm in another way. 

We will continue with the second graph in eq.\ \eqref{eqn:easy_example_graphs} and repeat the calculation. Evaluating the double coproduct is straightforward:
\vspace{-0.35cm}
\begin{gather*} \SetScale{0.5} \Delta_\M^{(2)} \Big( \vacuumoneloop \Big) = \vacuumoneloop \otimes\one \otimes \one + \one \otimes \vacuumoneloop \otimes \one + \one \otimes \one \otimes \vacuumoneloop \end{gather*}
As $ \SetScale{0.3} \vacuumoneloop$ does not have any non-trivial motic subgraphs.
Therefore,
\vspace{-0.35cm}
\begin{gather*} \SetScale{0.5} \ZZ \Big( \vacuumoneloop \Big) = - K \circ m_A^{(2)} \circ(\ZZ \otimes \phi \otimes \t \ZZ) \Big( \one \otimes \vacuumoneloop \otimes \one + \one \otimes \one \otimes \vacuumoneloop \Big) \\ \SetScale{0.5} = - K \left[ \phi\Big(\vacuumoneloop\Big) + \t \ZZ \Big( \vacuumoneloop \Big) \right] = - K \left[ \phi\Big(\vacuumoneloop\Big) \right] - \t \ZZ \Big( \vacuumoneloop \Big), \end{gather*}
where we used that $K$ is a projector in the last step.

We obtain a relation between $\SetScale{0.3} \ZZ ( \vacuumoneloop )$ and $\SetScale{0.3} \t \ZZ ( \vacuumoneloop )$. Using the vanishing of scaleless integrals under $\phi$ we then get the simple relation:
\vspace{-0.35cm}
\begin{align*} \SetScale{0.5} \ZZ \Big( \vacuumoneloop \Big) = - \t \ZZ \Big( \vacuumoneloop \Big), \end{align*}
which in fact holds generally for all $1$-loop scaleless logarithmically divergent Feynman graphs. In Section~\ref{sec:antipode}, we will generalise this observation and produce a simple algebraic relation between the UV  and IR counterterms of any scaleless logarithmically divergent Feynman graph.

Putting both ways to calculate the UV counterterm together, we therefore get an expression for both, the UV  and the IR counterterms for all involved graphs
\vspace{-0.35cm}
\begin{align*} \SetScale{0.5} \ZZ \Big( \vacuumoneloop \Big) &= \SetScale{0.5} \ZZ \Big( \bubble \Big)= - K \left[ \phi\Big(\bubble\Big) \right]\,, \\ \SetScale{0.5} \t \ZZ \Big( \vacuumoneloop \Big) &= \SetScale{0.5} K \left[ \phi\Big(\bubble\Big) \right] \,,\\ \SetScale{0.5} \t \ZZ \Big( \bubble \Big) &=0 \end{align*}
The last line follows from the fact that the graph in the argument is not scaleless. In more complicated examples, we will get more complicated relations between the UV  and IR counterterms. In this simple example the utility of these relations does not become fully apparent. The calculation of the value $\SetScale{0.3} K \left[ \phi\Big(\bubble\Big) \right] = \frac{1}{\epsilon}$ in dimensional regularisation with minimal subtraction is already the simplest way to calculate the respective counterterms. To illustrate the actual utility, we will discuss a more involved example.

\clearpage
\subsection{A three-loop example}
As a non-trivial example we consider two different three-loop Feynman diagrams,
\vspace{-0.35cm}
\begin{align} \label{eqn:hard_example} \SetScale{0.5} \diagramthreeloopA\,,\qquad \diagramthreeloopB \,. \end{align}
These graphs are logarithmically divergent in four-dimensional spacetime and differ only by the positions of their external legs. Therefore, their images under the $\ZZ$-operator are equal by Assumption~\ref{asp:firstIR}. In practice, there is an important difference between the two diagrams: the evaluation of the Feynman integral for the first diagram is slightly harder than the evaluation of the integral for the second one because of the integration trick explained in Section~\ref{setup:IRR}. Therefore, it is preferred to evaluate $\SetScale{0.3} K[ \phi (\diagramthreeloopB)]$ instead of $\SetScale{0.3} K[ \phi (\diagramthreeloopA)]$ in the course of calculating the respective UV counterterms. However, the choice of the graph $\SetScale{0.3} \diagramthreeloopB$ for the evaluation of the counterterm comes with a caveat: by `rerouting' the external momenta, new IR divergences are introduced into the graph, which were not present in $\SetScale{0.3} \diagramthreeloopA$, which is free of IR divergences. The sole practical purpose of the $R^*$-procedure in this context is therefore to isolate these newly introduced IR divergences and account for them in a consistent manner. The counterterm of both diagrams can then be obtained by evaluating the simpler Feynman integral for the graph $\SetScale{0.3} K[ \phi (\diagramthreeloopB)]$ and other Feynman integrals that are strictly less complicated. In the following example this procedure is illustrated in detail. 

We start by deriving an expression for the UV counterterm $\ZZ ( \SetScale{0.3} \diagramthreeloopA)$. Even though the formulas are more compact while working with the full motic coaction $\Delta_\M$, it is much more efficient to use the dedicated UV- and IR-coactions. The reason for this is the quick proliferation of terms in the triple coproducts, which are eventually projected to a small number of terms. In the following we will make use of the dedicated coactions and project out motic subgraphs, which do not lead to either a UV  or IR divergence, as soon as possible. 

We start by applying $\Delta^\text{IR}_\M$,
\vspace{-0.35cm}
\begin{gather} \label{eqn:simpleIRcoaction} \SetScale{0.5} \Delta_\M^\text{IR} \Big( \diagramthreeloopA \Big) = \diagramthreeloopA \otimes \one. \end{gather}
Of course, this simple result stems from the fact that $ \SetScale{0.3} \diagramthreeloopA$ does not have any (sub) IR divergences.
However, it has UV divergences, which can be captured by a subsequent application of the $\Delta^{\text{UV}}_\M$ coaction, %
\vspace{-0.35cm}
\begin{gather*} \SetScale{0.5} (\Delta^{\text{UV}}_\mathcal{M} \otimes \id) \circ \Delta_\M^\text{IR} (\diagramthreeloopA)= \\ \SetScale{0.5} \Big( \diagramthreeloopA \otimes \one + \vacuumoneloop \otimes \twoloopunderA + \vacuumoneloop \otimes \diagram2loopA\\ + \SetScale{0.5}\one \otimes \diagramthreeloopA + \vacuumoneloop \vacuumoneloop \otimes \bubble \Big) \otimes \one . \end{gather*}
Notice that the second and third term are equal. Hence, they will be combined. In the next step we act with the counterterm operators on the triple tensor product and multiply the result
\vspace{-0.35cm}
\begin{align*} &\SetScale{0.5}\ZZ(\diagramthreeloopA)= -K \circ m_A^{(2)} \circ(\ZZ \otimes \phi \otimes \t \ZZ) \circ \left( \Delta_\mathcal{M}^{(2)} - \id \otimes \one \otimes \one \right) \circ \rho^{\text{UV}} \SetScale{0.5} \Big(\diagramthreeloopA \Big) \\ &= -K \circ m_A^{(2)} \circ(\ZZ \otimes \phi \otimes \t \ZZ) \left( (\Delta^{UV}_\mathcal{M} \otimes \id) \circ \Delta_\M^\text{IR} - \id \otimes \one \otimes \one \right) \circ \rho^{\text{UV}} \SetScale{0.5} \Big(\diagramthreeloopA \Big) \\ &= \SetScale{0.5} -K \circ m_A^{(2)} \circ(\ZZ \otimes \phi \otimes \t \ZZ) \left( (\Delta^{UV}_\mathcal{M} \otimes \id) \circ \Delta_\M^\text{IR}\Big(\diagramthreeloopA\Big) - \diagramthreeloopA \otimes \one \otimes \one \right) \\ &= \SetScale{0.5} -K \circ m_A^{(2)} \circ(\ZZ \otimes \phi \otimes \t \ZZ) \Big( 2 \vacuumoneloop \otimes \twoloopunderA \otimes \one+ \SetScale{0.5}\one \otimes \diagramthreeloopA \otimes \one \\ &\qquad \SetScale{0.5} \qquad\qquad \qquad\qquad \qquad\qquad + \vacuumoneloop \vacuumoneloop \otimes \bubble \otimes \one \Big) \\ &= \SetScale{0.5}-K\left[\phi\Big(\diagramthreeloopA \Big) + 2 \ZZ\Big(\vacuumoneloop\Big) \phi\Big(\twoloopunderA \Big) + \Big( \ZZ \Big(\vacuumoneloop\Big) \Big)^2 \phi\Big(\bubble\Big)\right]. \end{align*}
As expected, we do not need the $\t \ZZ$-operator in this case, as there are no IR divergences. 
To evaluate the counterterm, we still have to apply the $\phi$ map to the Feynman graphs $\SetScale{0.3} \diagramthreeloopA$ and $\SetScale{0.3} \twoloopunderA$ and use our results from the last example.
The respective Feynman integrals can be evaluated using the FORM \cite{Ruijl:2017dtg} program FORCER \cite{tuLL2016} to give
\begin{gather*} \SetScale{0.5} \ZZ\Big(\diagramthreeloopA\Big)= -K\Big[\Big(-\frac{1}{\epsilon^3}-\frac{5}{\epsilon^2}-\frac{19}{\epsilon}\Big)+\Big(\frac{1}{3\epsilon^3}+\frac{8}{3\epsilon^2}+\frac{44}{3\epsilon}\Big)+\Big(\frac{1}{\epsilon^3}+\frac{2}{\epsilon^2}+\frac{4}{\epsilon}\Big) + \O(\epsilon^0)\Big]\\ = -\frac{1}{3\epsilon^3}+\frac{1}{3\epsilon^2}+\frac{1}{3\epsilon}. \end{gather*}
Let us now turn to the calculation of the second graph in eq.\ \eqref{eqn:hard_example}. We start by applying the IR-coaction,
\vspace{-0.35cm}
\begin{gather*} \SetScale{0.5} \Delta_\mathcal{M}^\text{IR}(\diagramthreeloopB)=\diagramthreeloopB \otimes \one + \bubble \otimes \diagramtwovacuum + \bubble \vacuumoneloop \otimes \vacuumoneloop\,. \end{gather*}
Note that, due to the presence of IR divergences produced during the IR rearrangement, the expression is significantly more involved than eq.~\eqref{eqn:simpleIRcoaction}.
We continue by acting with the double coaction $(\Delta_\M^\text{UV} \otimes \id) \circ \Delta_\M^\text{IR}$ to get
\vspace{-0.35cm}
\begin{gather*} \SetScale{0.5} \SetScale{0.5}(\Delta_\M^\text{UV} \otimes \id)\circ \Delta_\M^\text{IR}\Big(\diagramthreeloopB\Big) = \\ \SetScale{0.5}\Big(\diagramthreeloopB \otimes \one + \bubble \otimes \diagramtwovacuum + \vacuumoneloop \otimes \diagramtwoloopB + \one \otimes \diagramthreeloopB + \\ \SetScale{0.5} + \bubble \vacuumoneloop \otimes \vacuumoneloop \Big) \otimes \one +\Big(\bubble \otimes \one + \one \otimes \bubble \Big) \otimes \diagramtwovacuum \\ \SetScale{0.5}+\Big(\bubble \vacuumoneloop \otimes \one + \bubble \otimes \vacuumoneloop +\vacuumoneloop \otimes \bubble \\ \SetScale{0.5}+\one \otimes \bubble \vacuumoneloop\Big) \otimes \vacuumoneloop. \end{gather*}
We can now apply the counterterm operators and the Feynman rules in the form of the triple tensor product map $K \circ m_A^{(2)} \circ(\ZZ \otimes \phi \otimes \t \ZZ)$. Taking the vanishing of scaleless graphs under $\phi$ in the second entry of the tensor product into account, we find:
\vspace{-0.35cm}
\begin{gather*} \SetScale{0.5} \ZZ \Big(\diagramthreeloopB\Big) = \\ \SetScale{0.5}-K \circ m_A^{(2)} \circ(\ZZ \otimes \phi \otimes \t \ZZ) \circ \Big((\Delta_\M^\text{UV} \otimes \id) \Delta_\M^\text{IR}\Big(\diagramthreeloopB\Big) - \Big(\diagramthreeloopB\Big) \otimes \one \otimes \one \Big)= \\ \SetScale{0.5} -K\Big[\ZZ\Big(\vacuumoneloop\Big)\phi\Big(\diagramtwoloopB\Big) + \phi\Big(\diagramthreeloopB\Big) \SetScale{0.5}+ \ZZ\Big(\bubble\Big) \t \ZZ\Big(\diagramtwovacuum\Big) \\ +\SetScale{0.5} \phi \Big(\bubble \Big) \t \ZZ \Big(\diagramtwovacuum\Big) +\ZZ\Big(\bubble\Big) \ZZ\Big(\vacuumoneloop\Big) \t \ZZ\Big( \vacuumoneloop\Big) \\ +\SetScale{0.5} \ZZ \Big(\vacuumoneloop\Big) \phi \Big( \bubble \Big) \t \ZZ \Big( \vacuumoneloop \Big)\Big]. \end{gather*}
Using the identities from the one-loop example we get,
\vspace{-0.35cm}
\begin{gather*} \SetScale{0.5} \ZZ \Big(\diagramthreeloopB\Big) = -K\Big[\phi\Big(\diagramthreeloopB\Big)-\frac{1}{\epsilon}\phi\Big(\diagramtwoloopB\Big) \\ \SetScale{0.5} +\Big( \phi \Big(\bubble \Big)- \frac{1}{\epsilon} \Big) \t \ZZ\Big(\diagramtwovacuum\Big) + \frac{1}{\epsilon^3} - \frac{1}{\epsilon^2} \phi \Big( \bubble \Big) \Big] \end{gather*}
The only remaining IR counterterm is $\SetScale{0.3} \t \ZZ(\diagramtwovacuum)$, which we eventually also want to express using UV counterterms which we can evaluate explicitly.
We have,
\vspace{-0.35cm}
\begin{gather*} \SetScale{0.5}(\Delta_\M^\text{UV} \otimes \id)\circ \Delta_\M^\text{IR}\Big(\diagramtwovacuum\Big) = \\ \SetScale{0.5} (\Delta_\M^\text{UV} \otimes \id) \Big( \diagramtwovacuum \otimes \one + \vacuumoneloop \otimes \vacuumoneloop + \one \otimes \diagramtwovacuum \Big) \\ =\SetScale{0.5} \diagramtwovacuum \otimes \one \otimes \one + \vacuumoneloop \otimes \vacuumoneloop \otimes \one + \one \otimes \diagramtwovacuum \otimes \one \\ \SetScale{0.5} + \vacuumoneloop \otimes \one \otimes \vacuumoneloop + \one \otimes \vacuumoneloop \otimes \vacuumoneloop + \one \otimes \one \otimes \diagramtwovacuum \end{gather*}
Only the first, third and sixth term in the coaction calculation survive the application of the $\ZZ \otimes \phi \otimes \t \ZZ$ operator, because the factors in the middle of the triple coproduct are scaleless and $\phi$ maps them to $0$. Therefore, 
\vspace{-0.35cm}
\begin{gather} \begin{gathered} \label{eq:tSdotbanana} \SetScale{0.5} \t \ZZ \Big(\diagramtwovacuum\Big) = -K \circ m_A^{(2)} \circ(\ZZ \otimes \phi \otimes \t \ZZ) \Big( (\Delta_\M^\text{UV} \otimes \id)\circ \Delta_\M^\text{IR}\Big(\diagramtwovacuum\Big) - \one \otimes \one \otimes \diagramtwovacuum \Big) \\ \SetScale{0.5} = -K \circ m_A^{(2)} \circ(\ZZ \otimes \phi \otimes \t \ZZ) \Big( \diagramtwovacuum \otimes \one \otimes \one + \vacuumoneloop \otimes \one \otimes \vacuumoneloop \Big) \\ = \SetScale{0.5} -K\Big[ \ZZ \Big(\diagramtwovacuum \Big) + \ZZ \Big(\vacuumoneloop \Big) \t \ZZ \Big(\vacuumoneloop \Big) \Big] \\ \SetScale{0.5} = - \ZZ \Big(\diagramtwovacuum \Big) + \Big(\ZZ \Big(\vacuumoneloop \Big) \Big)^2 = - \ZZ \Big(\diagramtwovacuum \Big) + \frac{1}{\epsilon^2}, \end{gathered} \end{gather}
where we used the calculations of the previous example in the last two equalities.

The last task is to evaluate the remaining UV counterterm $ \SetScale{0.3} \ZZ \Big(\diagramtwovacuum \Big)$. For this we will again use IR rearrangement:
\vspace{-0.35cm}
\begin{gather*} \SetScale{0.5} \ZZ \Big(\diagramtwovacuum \Big) = \ZZ \Big(\diagram2loopA \Big) \end{gather*}
By adding external lines, we are dealing with an IR finite Feynman diagram, which can be evaluated via the Feynman rules.
The double coaction gives,
\vspace{-0.35cm}
\begin{gather*} \SetScale{0.5}(\Delta_\M^\text{UV} \otimes \id)\circ \Delta_\M^\text{IR}\Big(\diagram2loopA\Big) = \diagram2loopA \otimes \one \otimes \one + \vacuumoneloop \otimes \bubble \otimes \one + \one \otimes \diagram2loopA\otimes \one. \end{gather*}
As expected, there are no new IR divergences on the third factor of the triple tensor products of the double coaction.
This results in the UV counterterm,
\vspace{-0.35cm}
\begin{gather*} \SetScale{0.5} \ZZ \Big(\diagram2loopA \Big) = -K \Big[ \phi\Big(\diagram2loopA \Big) + \ZZ\Big(\vacuumoneloop\Big) \phi \Big(\bubble \Big)\Big] \\ \SetScale{0.5} = -K \Big[ \phi\Big(\diagram2loopA \Big) \Big] + \frac{1}{\epsilon^2} \end{gather*}
and for the IR counterterm,
$\SetScale{0.3} \t \ZZ (\diagramtwovacuum ) = K [ \phi(\diagram2loopA ) ]$. Note that we could have also used the simpler graph 
$\SetScale{0.3} \diagramtwoloopBt$ for the calculation of those counterterms. The price for this would have been slightly more involved combinatorics for the evaluation of the respective terms.

Substituting this result into the expression for the counterterm $\SetScale{0.3} \ZZ (\diagramthreeloopB)$ gives,
\vspace{-0.35cm}
\begin{gather*} \SetScale{0.5} \ZZ \Big(\diagramthreeloopB\Big) = \SetScale{0.5} -K\Big[\phi\Big(\diagramthreeloopB\Big)-\frac{1}{\epsilon}\phi\Big(\diagramtwoloopB\Big) \\ \SetScale{0.5} +\Big( \phi \Big(\bubble \Big)- \frac{1}{\epsilon} \Big) K \Big[ \phi(\diagram2loopA ) \Big] + \frac{1}{\epsilon^3} - \frac{1}{\epsilon^2} \phi \Big( \bubble \Big) \Big]. \end{gather*}
Observe that we only have simpler graphs than the original graph $\SetScale{0.3} \diagramthreeloopA$ which have to be evaluated using $\phi$ on the right hand side. The combinatorics of the calculation using IR rearrangement becomes much more involved than the combinatorics for the original calculation which did not involve any IR divergences. The advantage is the simpler structure of the Feynman graphs that have to be evaluated.

We may again evaluate the now simpler graphs on the right hand side to obtain
\vspace{-0.35cm}
\begin{gather*} \SetScale{0.5}\ZZ\Big(\diagramthreeloopB\Big)= -K\Big[ \Big(-\frac{1}{6\epsilon^3}-\frac{5}{6\epsilon^2}-\frac{23}{6\epsilon}\Big) +\frac{1}{\epsilon} \Big(\frac{1}{2\epsilon^2}+\frac{3}{2\epsilon}+\frac{9}{2}\Big) -\frac{1}{\epsilon^2} -\frac{1}{\epsilon} + \O(\epsilon^0) \Big] \\ =-\frac{1}{3\epsilon^3}+\frac{1}{3\epsilon^2} +\frac{1}{3\epsilon}, \end{gather*}
which agrees with the counterterm $\SetScale{0.3} \ZZ(\diagramthreeloopA)$ as expected. 
\section{An antipodal relation between UV and IR}
\label{sec:antipode}
The close relationship between the UV  and IR counterterms can be made more explicit in the Hopf algebra language. Especially, if we restrict ourselves to exclusively scaleless and logarithmically divergent graphs. Under this restriction we can express $\t \ZZ$-operator in terms of the $\ZZ$-operator. The precise relation is formulated in the following.
\begin{theorem}
If $\Gamma$ is a scaleless log-divergent graph with only log-subdivergences, then
\begin{align} \t \ZZ(\Gamma) &= \ZZ \circ S(\Gamma) & \ZZ(\Gamma) &= \t \ZZ \circ S(\Gamma), \end{align}
where $S$ is the antipode of the Hopf algebra of motic graphs.
\end{theorem}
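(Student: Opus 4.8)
The plan is to recast both claimed identities as the single convolution statement that $\ZZ$ and $\t\ZZ$ are mutually $\star$-inverse on the Hopf subalgebra of scaleless graphs, and then to read off $\ZZ\circ S$ and $\t\ZZ\circ S$ as these inverses. First I would record the standard Hopf-algebraic fact that, because $\ZZ$ and $\t\ZZ$ are algebra homomorphisms $\M\to\A$ (the multiplicativity $\ZZ(\delta_1\delta_2)=\ZZ(\delta_1)\ZZ(\delta_2)$, and likewise for $\t\ZZ$, is exactly what is needed), their $\star$-inverses are given by precomposition with the antipode. Indeed, for any character $f$ one has $f\star(f\circ S)=m_\A\circ(f\otimes f)\circ(\id\otimes S)\circ\Delta_\M=f\circ m_\M\circ(\id\otimes S)\circ\Delta_\M=f\circ\unit_\M\circ\counit_\M=\unit_\A\circ\counit_\M$, so $\ZZ^{\star-1}=\ZZ\circ S$ and $\t\ZZ^{\star-1}=\t\ZZ\circ S$. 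The two claimed equations are therefore together equivalent to the single identity $\ZZ\star\t\ZZ=\unit_\A\circ\counit_\M$ restricted to scaleless log-divergent graphs.

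Second, I would reduce $R^*$ to $\ZZ\star\t\ZZ$ on scaleless graphs. The linear span of scaleless graphs is a Hopf subalgebra of $\M$, since by the motic inheritance rules every subgraph of a scaleless graph, and every contraction, is again scaleless, so $\Delta_\M$ stays within this span. On this subalgebra the Feynman rules collapse to the convolution unit, $\phi=\unit_\A\circ\counit_\M$, because $\phi(\Lambda)=0$ for every scaleless $\Lambda\neq\one$ while $\phi(\one)=1$. Using the compact form $R^*=\ZZ\star\phi\star\t\ZZ$ of eq.~\eqref{eqn:short_R_star} together with associativity of $\star$, this gives $R^*=\ZZ\star\unit_\A\circ\counit_\M\star\t\ZZ=\ZZ\star\t\ZZ$ on all scaleless graphs.

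Third, and this is the crux, I would establish $\ZZ\star\t\ZZ=\unit_\A\circ\counit_\M$ by a pole-versus-finite dichotomy. For $\Gamma\neq\one$ scaleless and log-divergent one has $\omega(\Gamma)=0$, so $\Gamma$ is simultaneously superficially UV and superficially IR divergent; Theorem~\ref{thm:Rstarfinite} then applies and yields $R^*(\Gamma)=(\ZZ\star\t\ZZ)(\Gamma)\in\ker K$, i.e.\ its pole part vanishes. On the other hand, in minimal subtraction each factor occurring in $(\ZZ\star\t\ZZ)(\Gamma)=\sum_{\gamma}\ZZ(\gamma)\,\t\ZZ(\Gamma/\gamma)$ is a pure Laurent pole with no constant term whenever its argument is not $\one$ (this is precisely where the only-logarithmic-subdivergence hypothesis enters: it forces every $*$-insertion to reduce to ordinary multiplication and keeps all counterterms genuinely $\A$-valued with vanishing finite part). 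A product of pure poles is a pure pole, so $(\ZZ\star\t\ZZ)(\Gamma)$ lies in the image of $K$; being simultaneously in $\ker K$, it must vanish. Since $(\ZZ\star\t\ZZ)(\one)=1$, we conclude $\ZZ\star\t\ZZ=\unit_\A\circ\counit_\M$.

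Finally I would combine the three steps: the identity $\ZZ\star\t\ZZ=\unit_\A\circ\counit_\M$ forces $\t\ZZ=\ZZ^{\star-1}=\ZZ\circ S$, and because the characters form a group under $\star$ the same identity gives $\ZZ=\t\ZZ^{\star-1}=\t\ZZ\circ S$, which are the two asserted equations. The step I expect to be the main obstacle is the third one: one must argue carefully that no finite remainder can survive in $(\ZZ\star\t\ZZ)(\Gamma)$, and it is exactly the restriction to purely logarithmic subdivergences that secures this, since it keeps every counterterm a pure pole so that the elementary fact $\ker K\cap\operatorname{im}K=\{0\}$ can be invoked.
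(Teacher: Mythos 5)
Your proposal is correct and follows essentially the same route as the paper's own proof: reduce $R^*$ to $\ZZ \star \t\ZZ$ on scaleless graphs via $\phi = \unit_\A \circ \counit_\M$, combine Theorem~\ref{thm:Rstarfinite} with $\ker K \cap \operatorname{im} K = 0$ to conclude $\ZZ \star \t\ZZ = \unit_\A \circ \counit_\M$, and then read off the antipode relations from the fact that a character's convolution inverse is its composition with $S$. You are somewhat more explicit than the paper about why $(\ZZ\star\t\ZZ)(\Gamma)\in\operatorname{im}K$ (closure of pure poles under multiplication in MS) and about the character/antipode step, but the argument is the same.
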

\begin{proof}
Because $\Gamma$ is log-divergent and scaleless, $\rho^\text{IR}(\Gamma) = \Gamma$, and we may apply Theorem~\ref{thm:Rstarfinite}. It follows that $R^*(\Gamma) \in \ker K$. As $\phi(\Lambda) =0$ for all non-trivial scaleless graphs $\Lambda$ and because a scaleless graph has only scaleless subgraphs, we also have $R^*(\Gamma) = Z \star \phi \star \t Z = (Z \star \unit_\A \circ \counit_\M \star \t Z)(\Gamma) = (Z \star \t Z)(\Gamma) \in \text{im} K$. Because $K^2 = K$, we get $\ker K \cap \text{im} K=0$ and $R^*(\Gamma) = 0$. Therefore we have, $0 = (Z \star \t Z)(\Gamma)$, which means that $Z$ and $\t Z$ are mutually inverse under the convolution product on exclusively log-divergent graphs. The statement follows.
\end{proof}
A similar theorem holds if we lift the log-divergence assumption. However, we would need to deal with derivative and polynomial operators in the external momenta and such an analysis lies beyond the scope of the present paper. 
Note that this fixes the IR counterterm operator to be the exact dual of the UV counterterm operator under the antipode involution.

As an example of this, in explicit calculations also very useful relation, we start with the simple scaleless Feynman diagram $\SetScale{0.3} \vacuumoneloop$. Its motic antipode is simply given by
\begin{align*} \SetScale{0.5} S\Big( \vacuumoneloop \Big) = - \vacuumoneloop. \end{align*}
Therefore we recover the simple relation
\begin{align*} \SetScale{0.5} \t \ZZ\Big( \vacuumoneloop \Big) = \ZZ \circ S\Big( \vacuumoneloop \Big) = - \ZZ \Big( \vacuumoneloop \Big), \end{align*}
which we already exploited in the last section.

As a slightly more complicated example, consider the graph $\SetScale{0.2}\graphR$, which is also log-divergent.
Its motic coproduct is,
\begin{align*} \Delta_\M\Big(\SetScale{0.2}\graphR\Big) = \SetScale{0.2}\graphR \otimes \one +\SetScale{0.5}\vacuumoneloop \otimes \vacuumoneloop +\SetScale{0.2} \michitriangle \otimes \michitadpole +\SetScale{0.2} \one \otimes \graphR. \end{align*}
Applying the defining identity of the antipode $\unit_\M \circ \counit_\M = \id \star S = S \star \id$ gives,
\begin{gather*}    0 = \unit_\M \circ \counit_\M\Big( \SetScale{0.2}\graphR\Big) = (S \star \id ) \Big(\SetScale{0.2}\graphR\Big) = m_\M \circ ( S \otimes \id ) \circ \Delta_\M \Big(\SetScale{0.2}\graphR\Big) \\ = S\Big(\SetScale{0.2}\graphR\Big) +S\Big(\SetScale{0.5}\vacuumoneloop\Big)\vacuumoneloop+ S\Big(\SetScale{0.2} \michitriangle \Big) \michitadpole + S(\one ) \graphR . \end{gather*}
Also using the results of the trivial one-loop antipode calculations,
\begin{align*} S\Big(\SetScale{0.5}\vacuumoneloop\Big) &= -\SetScale{0.5}\vacuumoneloop \\ S\Big(\SetScale{0.2} \michitriangle \Big) &= - \michitriangle, \end{align*}
as well as $S(\one) = \one$ results in
\begin{align*} S\Big(\SetScale{0.2}\graphR\Big) &= - \graphR +\Big(\SetScale{0.5}\vacuumoneloop\Big) ^2 + \SetScale{0.2} \michitriangle \michitadpole \end{align*}
It follows that 
\begin{align*} \t \ZZ\Big(\SetScale{0.2}\graphR\Big) = \ZZ \circ S \Big( \SetScale{0.2}\graphR\Big) = - \ZZ \Big(\graphR \Big) +\ZZ \Big(\SetScale{0.5}\vacuumoneloop\Big) ^2 + \ZZ \Big(\SetScale{0.2} \michitriangle \Big)\ZZ \Big( \michitadpole \Big). \end{align*}
The last term vanishes because $\ZZ (\SetScale{0.15} \michitriangle ) = 0$ as the graph is not UV divergent. Therefore, we get the result which was expected from the direct calculation in eq.\ \eqref{eq:tSdotbanana}.
\section{Discussion and Outlook}
\label{sec:conclusion}

We have established a new Hopf-algebraic formulation of the $R^*$-operation for log-divergent graphs. 
In particular we proved a theorem on the equivalence of infrared subgraphs in the original definition of the $R^*$-operation 
and mass-momentum spanning motic subgraphs, which were introduced by Brown. 
Based on Brown's motic Hopf algebra we introduced two coactions which respectively project out IR  and UV divergent subgraphs and which allow to 
organise the combinatorics of the $R^*$-operation. Using the coassociativity of the 
coproduct of the motic Hopf algebra we proved a coassociativity relation of the IR- and UV-coactions. From this it also follows that the subtraction of IR and UV divergences `commutes'. To illustrate the new formulation we provided numerous examples with explicit calculations. 

Finally we discussed the structure of the IR  and UV  counterterms and showed that they can be related to each other using the antipode of the motic Hopf algebra. This last property in particular shows that the IR and UV counterterm operators are inverse under the Hopf algebra convolution product, thereby providing a clear algebraic picture of the duality between IR and UV divergences. 

There exist many possible directions for future extensions of this research. For once it would be interesting to extend the Hopf-algebraic formulation beyond the log-divergent case. There certainly exist obstacles for accomplishing this since the graphs must be endowed with tensor structure, which requires graphs in different tensor entries to have knowledge of each other which in turn creates difficulties for keeping the Hopf algebra structure. A possible first step to solve this problem might be a bialgebra construction due to Kock \cite{Kock:2015bya}.

With the mathematically rigid formulation of the IR divergences in the Hopf algebra framework, the door is now also open for future investigations along the lines of \cite{Kreimer:2005rw,Kreimer:2006ua,vanSuijlekom:2006fk} on the mathematical structure of IR singularities in the context of perturbative quantum field theories. Our algebraic formulation also immediately suggests itself to be implemented in computer algebra as it was done in \cite{Borinsky:2014xwa} for the Connes-Kreimer Hopf algebra.

Although we achieved a rigid definition of the $R^*$-operation and the associated counterterm operators in the Hopf algebra language, there are many formal questions that remain open. Obviously, Assumption~\ref{asp:firstIR} demands for a rigorous proof. Moreover, our analysis suggests that the statement of Assumption~\ref{asp:firstIR}, together with additional conditions and the finiteness of the $R^*$-operation from Theorem~\ref{thm:Rstarfinite} might serve as a complete \emph{definition} of the IR counterterm operator $\t Z$. An interpretation of this counterterm operator in terms of a generalised Riemann-Hilbert problem as in \cite{Connes:1999yr} might also be feasible.

Finally, a striking question  emerges from our result: can the algebraic structure of the Euclidean IR renormalisation procedure also be extended to the Minkowskian case? Already much progress has been made towards establishing a forest-like subtraction formula for general IR divergences of both soft and collinear type for both loop corrections \cite{Humpert:1981jw,Collins:1981uk,Collins:2011zzd,Erdogan:2014gha} as well as real radiation corrections; see for instance \cite{Caola:2017dug,Magnea:2018ebr,Herzog:2018ily} for some recent developments in this direction at fixed order in perturbation theory. In particular the recent work \cite{Ma:2019hjq}, where a forest formula for wide-angle scattering is derived based on an algebra of normal spaces, could prove useful for an extended algebraic formulation.

\section*{Acknowledgements}
This work has been supported by the the NWO Vidi grant 680-47-551  ``Decoding Singularities of Feynman graphs'' and the UKRI Future Leaders Fellowship ``Forest Formulas for the LHC'' Mr/S03479x/1. We are grateful to F.\ Brown and D.\ Kreimer for useful discussions and comments on the manuscript. We would also like to thank D. Broadhurst and J. Roosmale for comments. Furthermore, the authors thank all the members of the inner circle.

\providecommand{\href}[2]{#2}\begingroup\raggedright\endgroup

\end{document}